\documentclass[11pt]{article}
\usepackage[english]{babel}
\usepackage{fullpage}
\usepackage[utf8]{inputenc}
\usepackage{amsmath,amsthm,amsfonts,mathtools}
\usepackage{xcolor}
\usepackage[hidelinks]{hyperref}
\usepackage{cleveref}
\usepackage{tcolorbox}
\usepackage{wrapfig}

\usepackage[linesnumbered,ruled,vlined]{algorithm2e}

\title{Non-obvious Manipulability in Hedonic Games with Friends Appreciation Preferences}
\author{Michele Flammini$^{1,2}$, Maria Fomenko$^1$, Giovanna Varricchio$^2$}
\date{%
	$^1$Gran Sasso Science Institute, L'Aquila, Italy\\%
	$^2$University of Calabria, Rende, Italy\\%
		michele.flammini@gssi.it, maria.fomenko@gssi.it, giovanna.varricchio@unical.it
	\\[2ex]%
	\today
}

% Packages
\usepackage{amsthm}
\usepackage{enumitem}
\usepackage{cleveref}
\usepackage{marginnote}
\usepackage{todonotes}

\usepackage{tikz}
\usetikzlibrary{calc} 
\usetikzlibrary{arrows}

\usepackage{subcaption}

\usepackage{microtype}
\usepackage{varwidth}

\usepackage{thm-restate}

% Comments 

%Theorems
\newtheorem{proposition}{Proposition}
\newtheorem{observation}{Observation}
\newtheorem{theorem}{Theorem}
\newtheorem{lemma}{Lemma}

\theoremstyle{definition}
\newtheorem{example}{Example}
\newtheorem{definition}{Definition}

%commands 

\newcommand{\agents}{\mathcal{N}}

\newcommand{\SW}{\mathsf{SW}}
\newcommand{\set}[1]{\{#1\}}
\newcommand{\ceil}[1]{\left\lceil#1\right\rceil}
\newcommand{\floor}[1]{\left\lfloor#1\right\rfloor}
\newcommand{\modulus}[1]{\left\lvert #1 \right\rvert }
\newcommand{\FA}{\mathsf{FA}}
\newcommand{\EA}{\mathsf{EA}}
\newcommand{\outcomes}{\Pi}
\newcommand{\GC}{\mathcal{GC}}
\newcommand{\Gf}{G^f}
\newcommand{\threePartition}{$3$-\textsc{Partition}}
\newcommand{\iIsFriend}[1]{F_{#1}^{-1}}

\newcommand{\goct}{gOct}
\newcommand{\neigh}{N}
\newcommand{\improveSW}{\textsc{ImproveSW}}
\newcommand{\randMech}{\textsc{RandMech}}
\newcommand{\edg}{edg}

\newcommand{\OPT}{\textsf{OPT}}
\newcommand{\opt}{\textsf{opt}}
\newcommand{\instance}{\mathcal{I}}

%% Mechanisms

\newcommand{\mech}{\mathcal{M}}
\newcommand{\declared}{\mathbf{d}}
\newcommand{\declarations}{\mathcal{D}}

\newcounter{mechanismCounter}
\setcounter{mechanismCounter}{1}

\newenvironment{mechanism}
{\mecha \addtocounter{mechanismCounter}{1}}
{\endmecha}

\crefname{mechanism}{Mechanism}{Mechanism}

%% Operators

\DeclareMathOperator{\argmax}{arg\,max}

\usepackage[ruled,vlined]{algorithm2e}
\SetKwInOut{Input}{Input}\SetKwInOut{Output}{Output}
%\SetKwIF{If}{ElseIf}{Else}{if}{}{else if}{else}{end if}%

\begin{document}
\maketitle

\begin{abstract}
In this paper, we study non-obvious manipulability (NOM), a relaxed form of strategyproofness, in the context of Hedonic Games (HGs) with Friends Appreciation ($\FA$) preferences. In HGs, the aim is to partition agents into coalitions according to their preferences which solely depend on the coalition they are assigned to. Under $\FA$ preferences, agents consider any other agent either a friend or an enemy, preferring coalitions with more friends and, in case of ties, the ones with fewer enemies. Our goal is to design mechanisms that prevent manipulations while optimizing social welfare.

Prior research established that computing a welfare maximizing (optimum) partition for $\FA$ preferences is not strategyproof, and the best-known approximation to the optimum subject to strategyproofness is linear in the number of agents. In this work, we explore NOM to improve approximation results. We first prove the existence of a NOM mechanism that always outputs the optimum; however, we also demonstrate that the computation of an optimal partition is NP-hard. To address this complexity, we focus on approximation mechanisms and propose a NOM mechanism guaranteeing a $(4+o(1))$-approximation in polynomial time.

Finally, we briefly discuss NOM in the case of Enemies Aversion ($\EA$) preferences, the counterpart of $\FA$, where agents give priority to coalitions with fewer enemies and show that no mechanism computing the optimum can be NOM.
\end{abstract}

\section{Introduction}
Hedonic Games (HGs) \cite{dreze1980hedonic} offer a game-theoretic framework for understanding the coalition formation of selfish agents and have been extensively studied in the literature (e.g.,\cite{aziz2013pareto,aziz2013computing,banerjee2001core,Bogomolnaia02,elkind2009hedonic,elkind2020price,gairing2010computing}).
In such games, the objective is to partition a set of agents into disjoint coalitions, with each agent's satisfaction determined solely by the members of her coalition. 
Depending on the nature of the agents' preferences, several HGs classes arise that may capture various social interactions between agents. 
For example, in \emph{additively separable} HGs (ASHGs)~\cite{hajdukova2004coalition}, agents evaluate coalitions by summing up the values they assign to every other participant;
in HGs with \emph{friends appreciation} ($\FA$) preferences~\cite{dimitrov2004enemies}, each agent splits the others into friends and enemies and prefers coalitions with more friends, in case of ties, she favors the ones with less enemies.

While most of the existing literature on HGs has put attention on the existence and computation of several stability concepts based on either individual~\cite{Bloch2011,Feldman15,gairing2010computing} or group deviations~\cite{Bogomolnaia02,banerjee2001core,elkind2009hedonic,gairing2010computing,igarashi2016hedonic}, a recent stream of research is focusing on
designing \emph{strategyproof} (SP) mechanisms. Such mechanisms prevent agents from manipulating the outcome by misrepresenting their preferences while ensuring desirable properties like stability or a reasonable approximation to the maximum \emph{social welfare} -- the sum of the agents' utilities in the outcome.
Unfortunately, achieving strategyproofness with good social welfare guarantees is challenging. For ASHGs no bounded approximation mechanism is SP~\cite{flammini2021strategyproof}, and even in the simple case of $\FA$ preferences the best-known SP mechanism guarantees an approximation linear in the number of agents~\cite{flammini2022strategyproof}. 

More broadly, strategyproofness has been widely studied in several game-theoretic settings, nonetheless, such a requirement is often incompatible with other desirable properties or even impossible to achieve \cite{ozyurt2009general, amanatidis2017truthful, brandl2018proving}. Moreover, there exist mechanisms that are not strategyproof in a strict sense, but, in order to successfully manipulate, an agent has to possess the knowledge of others' strategies and deeply understand the underlying mechanics. Otherwise, she might end up with an outcome that is even worse than the one she attempted to avoid. However, the ability of a cognitively limited agent to satisfy this requirement seems unrealistic, which has led to the notion of \emph{non-obvious manipulability} (NOM) introduced to distinguish the mechanisms that can be easily manipulated from the ones that are unlikely to be manipulated in practice~\cite{troyan2020obvious}.

\paragraph{Our Contribution.} With this paper, we initiate the study of NOM in the context of HGs focusing on $\FA$ preferences. We aim at improving upon the performances, in terms of the social welfare guarantee, of SP mechanisms in this setting. To this end, we begin by analyzing the structure of optimal outcomes and give a deeper understanding of how such outcomes look like for some interesting instances. We specifically focus on some structures of friendship relationships (that we will call generalized octopus graphs) which turn out to be very useful for providing a picture of socially optimal outcomes. This enables us to show that there always exists a NOM mechanism computing a social welfare maximizing partition (\Cref{thm:optIsNOM}). Unfortunately, we also show that finding such an outcome is NP-hard (\Cref{thm:OPThard}).
We, therefore, propose a $(4+o(1))$-approximation mechanism that is also NOM and works in polynomial time (\Cref{thm:apxIsNOM}). This mechanism provides a significant improvement over the existing strategyproof mechanism, and, besides NOM, it constitutes the first constant approximation for the problem of maximizing the utilitarian for $\FA$ preferences. 
Finally, we investigate NOM for {\em enemies aversion} ($\EA$) preferences, the natural counterpart of $\FA$ where agents give priority to coalitions with fewer enemies, and show that no optimal mechanism is NOM (\Cref{thm:EAnotNOM}). This shows that NOM, albeit it might be considered a weak notion, is not always compatible with optimality.

%Due to the space limit, some proofs are deferred to the appendix.

\paragraph{Related Work.}
HGs with $\FA$ and $\EA$ preferences have been widely studied and further extended to capture more involved social contexts~\cite{dimitrov2006simple,rothe2018borda,kerkmann2022altruistic,barrot2019unknown}. In this stream of research, a systematic analysis of stable outcomes has been provided~\cite{chen2023Hedonic}. In addition to stability, strategyproofness has also been considered: In~\cite{dimitrov2004enemies} the authors show that for $\FA$ and $\EA$ preferences SP is compatible with (weak) core stability. However, such solutions are hard to compute in the case of $\EA$ preferences~\cite{dimitrov2006simple}. 
Beyond $\FA$ and $\EA$ preferences, for more general friends-oriented preferences~\cite{klaus2023core} and ASHGs~\cite{rodriguez2009strategy}, SP mechanisms guaranteeing stability have been investigated.

Some recent studies, instead of seeking stability, have concentrated on strategyproof mechanisms approximating the maximum social welfare~\cite{varricchio2023approximate}. For ASHGs in general, even when the range of agents' utilities is bounded, it was proven that a non-manipulable algorithm with a bounded approximation ratio cannot exist~\cite{flammini2021strategyproof}; the authors also provide bounded, but non-constant SP mechanisms for very restricted settings. Similarly, in the $\FA$ model, a deterministic mechanism with the approximation ratio linear in the number of agents and a randomized one with a constant approximation ratio have been provided \cite{flammini2022strategyproof}. Also, in the case of $\EA$ preferences, the best-known polynomial algorithm achieves a linear approximation in the number of agents, while a constant approximation ratio is possible when time complexity is not a concern~\cite{flammini2022strategyproof}, and this result has been proven to be asymptotically tight. Some attempts in achieving SP and bounded approximation have been made also for a proper superclass of HGs, namely, the group activity selection problem~\cite{flammini2022approximate}.

In contrast to SP, in the past few years, non-obvious manipulability has been introduced~\cite{troyan2020obvious}. This notion turned out to be a relaxation good enough to circumvent the inherent impossibility results of strategyprofness. In voting theory, non-obvious manipulability allows to bypass of a famous strong negative result stating the non-existence of a voting rule for more than two alternatives which is at the same time strategyproof and not dictatorial \cite{aziz2021obvious}. In the assignment problem, under a minor restriction the whole class of rank-minimising mechanisms, which directly optimize an objective natural for this problem, turns out to be NOM~\cite{troyan2024non}. In the problem of fairly allocating indivisible goods, replacing strategyproofness with non-obvious manipulability allows the design of a Pareto-efficient and non-dictatorial mechanism as well as a mechanism that guarantees envy-freeness up to one item~\cite{psomas2022fair}.

Since in HGs strategyproof mechanisms often fail to approximate the maximum social welfare with a constant ratio or turn out to be ineffective in the computational sense, this provides us with additional motivation to study NOM mechanisms.

\section{Preliminaries}

\paragraph{Hedonic Games and Friends Appreciation preferences.} In the classical framework of HGs, we are given a set of $n$ agents, denoted by $\agents=\set{1,\dots,n}$, and the goal is to partition them into disjoint coalitions. In other words, we aim at creating a disjoint partition $\pi=\set{C_1,\dots, C_m}$ such that $\cup_{h=1}^m C_h = \agents$ and $C_h \cap C_k= \emptyset$, for $h\neq k$. Such a partition is also called an \emph{outcome} or a \emph{coalition structure}. The \emph{grand coalition} $\GC$ is a partition consisting of exactly one coalition containing all the agents, while a {\em singleton coalition} is any coalition of size $1$. We denote by $\outcomes$ the set of all possible outcomes of the game, i.e., all possible partitions of the agents, and by $\pi(i)$ the coalition that agent 
$i$ belongs to in a given outcome $\pi\in\outcomes$. 

In HGs, the agents evaluate an outcome on the sole basis of the coalition they belong to and not on how the others aggregate.
As a result, each agent $i$ has a preference relation $\succeq_i$ over $\agents_i$, where $\agents_i$ is the family of subsets of $\agents$ containing $i$. Given $X,Y\in \agents_i$, we say that agent $i$ prefers, or equally prefers, $X$ to $Y$ whenever $X\succeq_i Y$.

In HGs with {\em friends appreciation} ($\FA$) preferences, each agent $i$ partitions the other agents into a set of friends $F_i$ and a set of enemies $E_i$, with $F_i \cup E_i = \agents \setminus \{i\}$ and $F_i \cap E_i = \emptyset$. The preferences of $i$ among coalitions in $\agents_i$ are as follows: $X\succeq_iY$ if and only if 
\begin{align*}
&\modulus{X\cap F_i}> \modulus{Y\cap F_i}
\mbox{ or } \\
&\modulus{X\cap F_i} = \modulus{Y\cap F_i} \mbox{ and }  \modulus{X\cap E_i} \leq \modulus{Y\cap E_i}\, .
\end{align*}
In other words, a coalition is preferred over another one if it contains a higher number of friends; if the number of friends is the same, the coalition with fewer enemies is preferred.

\begin{example} \label{example::instance}
Let us describe a simple instance with friends appreciation preferences: Let $\agents=\set{1,2,3}$ be the set of agents, and let $F_1=\set{2}$, $F_2=\set{3}$, $F_3=\set{2}$ and $E_1=\set{3}$, $E_2=\set{1}$, $E_3=\set{1}$ be the agents' sets of friends and enemies, respectively. 
This instance is depicted in \Cref{subfig::firstInstance}, where a directed edge from agent $i$ to agent $j$ represents $i$'s opinion of $j$; solid edges and dashed edges represent friend and enemy relations, respectively. 
\end{example}

\begin{figure}[tb]\centering 
\def \variable {1.5cm}
	\begin{subfigure}{0.45\textwidth}
		\centering
		\begin{tikzpicture}[->,>=stealth',auto,node distance=\variable,
		thick,main node/.style={circle,draw,font=\sffamily\bfseries\small}]
		\node[main node] (2) {$2$};
		\node[main node] (1) [above left of=2] {$1$};
		\node[main node] (3) [above right of=2] {$3$};
		\path[every node/.style={font=\sffamily\small}]
		(1) edge [bend right =15]  node {} (2)
		(3) edge [bend right =15] node {} (2)
		(2) edge [bend right =15] node {} (3)
		(2) edge [bend right =15, dashed]  node {} (1)
		(1) edge [bend right =15, dashed]  node {} (3)
		(3) edge [bend right =15, dashed]  node {} (1)	
		;
		\end{tikzpicture}
		\caption{An $\FA$ instance $\instance$.}
	\label{subfig::firstInstance}
	\end{subfigure}
	\begin{subfigure}{0.45\textwidth}
		\centering
		\begin{tikzpicture}[->,>=stealth',auto,node distance=\variable,
		thick,main node/.style={circle,draw,font=\sffamily\bfseries\small}]
		\node[main node] (2) {$2$};
		\node[main node] (1) [above left of=2] {$1$};
		\node[main node] (3) [above right of=2] {$3$};
		\path[every node/.style={font=\sffamily\small}]
		(1) edge [bend right =15]  node {} (2)
		(3) edge [bend right =15] node {} (2)
		(2) edge [bend right =15] node {} (3);
		\end{tikzpicture}
		\caption{The friendship graph~$\Gf(\instance)$.}
		\label{subfig::secondInstance}
	\end{subfigure}
 \caption{A $\FA$ instance and the corresponding $\Gf$. 
 Solid (resp.\ dashed) edges represent friend (resp.\ enemy) relations.
 }
\end{figure}
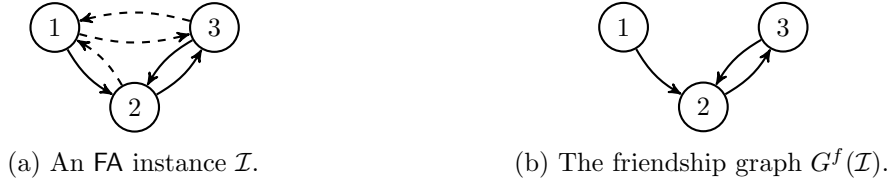

For our convenience, we shall denote by $\iIsFriend{i}=\set{j\in \agents \,\vert\, i\in F_j}$, that is, the set of agents considering $i$ a friend.

$\FA$ are a proper subclass of ASHGs, 
where each agent $i$ has a value $v_i(j)$ for every other agent $j$ and her utility for being in a given coalition $C\in\agents_i$ is $u_i(C)= \sum_{j \in C \setminus \{i\}} v_i(j)$. 
Specifically, to fit the $\FA$ preferences, the values of an agent $i$ can be defined as
\begin{align*}
v_i(j)= \begin{cases}
1,& \enspace \mbox{ if } j\in F_i, \\ -\frac{1}{n},& \enspace \mbox{ if } j\in E_i \, .
\end{cases}
\end{align*}
These values correctly encode $\FA$ preferences as the sum of the absolute values of all the enemies never exceeds the value of one friend. The valuation functions just presented were already assumed in~\cite{dimitrov2006simple,flammini2021strategyproof}. Since the utility of an agent depends only on the coalition she belongs to, we might write $u_i(\pi)$ to denote $u_i(\pi(i))$.

\begin{example}\label{example::computingUtilities}
Let us consider the instance described in \Cref{example::instance} and the partition $\pi=\set{\set{1,2,3}}$. Then, $u_1(\pi)= v_1(2) +v_1(3) = 1 - \frac{1}{3} = \frac{2}{3}$. Similarly, since the utility of an agent depends only on the number of friends/enemies in her coalition, $u_2(\pi)=u_3(\pi)=\frac{2}{3}$.
\end{example}

An $\FA$ instance $\instance$ is given by a set of agents $\agents$ and a set of friends $F_i$, for each $i\in\agents$. Alternatively, $\instance = (\agents, \set{v_i}_{i\in\agents})$, where $v_i$ is the valuation of $i$ for the other agents representing her $\FA$ preferences. For simplicity, we might also write $\instance = (\set{v_i}_{i\in\agents})$.

\paragraph{Social Welfare and Optimum.}
One of the challenges in HGs is to determine which outcomes maximize the overall happiness of the agents measured by the \emph{social welfare} (SW). Specifically, in an HG instance $\instance$ the social welfare of a partition $\pi$ is given by
\[
\SW^\instance(\pi)= \sum_{i\in\agents} u_i(\pi) \, .
\]
When the instance is clear from the context we simply write $\SW$.
We call \emph{social optimum}, or simply \emph{optimum}, any outcome $\OPT$ in $\argmax\limits_{\pi\in\outcomes}\SW(\pi)$ and denote by $\opt$ the value $\SW(\OPT)$. When considering a coalition $C$, to denote $\sum_{i\in C} u_i(C)$ we write $\SW(C)$.

\paragraph{Graph Representation}
A very convenient representation of an $\FA$ instance $\instance$ is by means of a directed and unweighted graph where the agents of the instance are the vertices. With $E_i$ being $\agents\setminus \set{F_i\cup \set{i}}$, we represent only friendship relationships through directed edges: if $\set{i,j}$ is an edge of this graph, it means $j\in F_i$; if such an edge does not exist, we have $j\in E_i$. We call this graph the {\em friendship graph} of $\instance$ and we denote it by $\Gf(\instance)= (\agents, F)$, where $F=\set{\set{i,j} \,\vert\, j\in F_i}$. If the instance is clear from the context we simply write $\Gf$. The friendship graph of the instance described in \Cref{example::instance} is shown in \Cref{subfig::secondInstance}.
Moreover, we denote by $\neigh(i)$, for $i\in\agents $, the weakly connected neighborhood of $i$, that is, $\neigh(i)= F_i \cup F^{-1}_i$. We denote by $\delta(i)$ the size of $\neigh(i)$. We further extend the definition of weekly connected neighborhood to a subset of agents $X \subseteq \agents$, specifically $\neigh(X)= \cup_{i\in X} \neigh(i)$.

\paragraph{Strategyproofness and Non-obvious Manipulability.}
The sets $F_i$ and $E_i$ might be private information of the agent $i$; therefore, to compute the outcome we need to receive this information from the agents. Let us denote by $\declared=(d_1, \dots, d_n)$ the agents' {\em declarations} vector, where $d_i$ contains the information related to agent $i$. We assume direct revelation, and hence $d_i(j)\in \set{1, -\frac{1}{n}}$ represents the value $i$ declared for an agent $j$.
We denote by $\declarations$ the space of feasible declarations $\declared$. For our convenience, we denote by $\declared_{-i}$ the agents' declarations except the one of $i$, by $\declarations_{-i}$ the set of all feasible $\declared_{-i}$, and by $\declarations_i$ the feasible declarations for $i$.

In this setting, the natural challenge is to design algorithms, a.k.a.\ \emph{mechanisms}, inducing truthful behavior of the agents. We shall denote by $\mech$ a mechanism and by $\mech(\declared)$ the output of the mechanism -- a partition upon the declaration $\declared$ of the agents. 

The agents might be strategic, which means, an agent $i$ could declare $d_i\neq t_i$, where $t_i\in\declarations_i$ is the real information of agent $i$, also called her \emph{real type}. For this reason, the design of mechanisms preventing manipulations is fundamental. The most desirable characteristic for such kind of mechanisms is \emph{strategyproofness}.

\begin{definition}[Strategyproofness and Manipulability]\label{def:SP}
    A mechanism $\mech$ is said to be \emph{strategyproof} (SP) if for each $i\in\agents$ having real type $t_i$, and any declaration of the other agents $\declared_{-i}$
    \begin{equation}\label{eq:strategyproofness}
         u_i(\mech(t_i, \declared_{-i})) \geq u_i(\mech(d_i, \declared_{-i}))
    \end{equation}
    holds true for any possible false declaration $d_i\neq t_i$ of agent $i$.

    In turn, a mechanism is said to be \emph{manipulable} if there exists an agent $i$, a real type $t_i$ and declarations $\declared_{-i}$ and $d_i\neq t_i$ such that \Cref{eq:strategyproofness} does not hold. Then, such a $d_i$ is called a \emph{manipulation}.
\end{definition}
Since SP mechanisms may be quite inefficient w.r.t.\ the truthful $\opt$, we aim to understand if mechanisms satisfying milder conditions lead to more efficient outcomes. 
Considering that $i$ might be unaware of which are the declarations $\declared_{-i}$ of the other agents, she could not be able to determine a manipulation without knowing $\declared_{-i}$. Thus,  we next consider a relaxation of SP where an agent $i$ decides to misreport her true values only if it is clearly profitable for her. 
Given a mechanism $\mech$, let us denote by 
%$
$
\Pi_i(d_i,\mech)= \set{\mech(d_i, \declared_{-i}) \,\vert\,\declared_{-i}\in \declarations_{-i}}\, ,
$
%$ 
the space of possible outcomes of $\mech$ given the declaration $d_i$ of $i$.  Notice the space $\Pi_i(d_i,\mech)$ is finite.
\begin{definition}[Non-obvious Manipulability]\label{def:NOM}
    A mechanism $\mech$ is said to be \emph{non-obviously manipulable} (NOM) if for every $i\in\agents$, real type $t_i$, and any other declaration $d_i$ the following hold true:
\medskip
\\
\textbf{Condition 1:} $\max\limits_{\pi\in \Pi_i(t_i,\mech)} u_i(\pi) \geq \max\limits_{\pi\in \Pi_i(d_i,\mech)} u_i(\pi)$
\smallskip
\\
\textbf{Condition 2:} $\min\limits_{\pi\in \Pi_i(t_i,\mech)} u_i(\pi) \geq \min\limits_{\pi\in \Pi_i(d_i,\mech)} u_i(\pi)$
\medskip
\\
If there exist $i$, $t_i$, and $d_i$ such that Condition 1 or 2 is violated, then, $\mech$ is \emph{obviously manipulable} and $d_i$ is an \emph{obvious manipulation}.
\end{definition}

In other words, a mechanism $\mech$ is NOM for an agent $i$, neither the best nor the worst possible outcome when declaring $t_i$ can be strictly worse than the best or the worst outcome attainable when declaring $d_i$ (worst/best outcomes and their comparisons are determined according to her truthful preferences).
In contrast, the strategyproofness of a mechanism $\mech$ ensures that for every $\declared_{-i}$, including the ones inducing the best/worst case outcome of $\Pi_i(t_i, \mech)$, is not strictly convenient to misreport; therefore, SP $\Rightarrow$ NOM.

In what follows, we always denote by $t_i$ the real type of $i$ and by $e_i=\modulus{E_i}$ and $f_i=\modulus{F_i}$, where $E_i$ and $F_i$ are the truthful set of friends and enemies of $i$, respectively.

\subsection{Preliminary Results on Optimum Outcomes}

In this section, we discuss in detail some useful properties of optimal outcomes putting particular attention on specific graph structures for the friendship graph.
Let us start by observing that to compute the social welfare of a coalition it suffices to know its size and the number of friendship relationships within the coalition.

\begin{lemma}[From \cite{flammini2022strategyproof}]\label{prop-utility-coaltion}
For any $C\subseteq \agents$ of size $c$ and containing $f_C$ friendship relations, $\SW(C)= f_C\cdot \left(1+\frac{1}{n}\right)-\frac{c(c-1)}{n}$.
\end{lemma}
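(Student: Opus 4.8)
The plan is to evaluate $\SW(C)$ directly from its definition and then regroup the summands according to whether each contributing pair encodes a friendship or an enemy relation. First I would unfold the definition of social welfare of a coalition: $\SW(C)=\sum_{i\in C}u_i(C)=\sum_{i\in C}\sum_{j\in C\setminus\set{i}}v_i(j)$. This is a sum over all ordered pairs $(i,j)$ of distinct agents of $C$, of which there are exactly $c(c-1)$.

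Next I would partition these ordered pairs into two classes using the valuation rule of $\FA$ preferences. By definition $v_i(j)=1$ precisely when $j\in F_i$, i.e.\ when $(i,j)$ is a (directed) friendship relation, and $v_i(j)=-\tfrac{1}{n}$ otherwise, i.e.\ when $j\in E_i$. Since the number of friendship relations inside $C$ is $f_C$, the remaining $c(c-1)-f_C$ ordered pairs are all enemy relations. Hence the double sum collapses to $\SW(C)=f_C\cdot 1+\bigl(c(c-1)-f_C\bigr)\cdot\bigl(-\tfrac{1}{n}\bigr)$, and a one-line rearrangement yields $f_C\bigl(1+\tfrac{1}{n}\bigr)-\tfrac{c(c-1)}{n}$, as claimed.

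The only point that needs care — and the closest thing to an obstacle here — is the bookkeeping of the directedness of friendship. Because an agent $i$ may regard $j$ as a friend while $j$ regards $i$ as an enemy, $f_C$ must count ordered pairs (directed edges of $\Gf$ with both endpoints in $C$), not unordered pairs; with this reading the total count $c(c-1)$ and the friend/enemy split are exact, so no double counting occurs. A quick sanity check on \Cref{example::instance} with $C=\agents$, where $c=n=3$ and $f_C=3$, gives $3\cdot\tfrac{4}{3}-2=2$, matching the three utilities of $\tfrac{2}{3}$ computed in \Cref{example::computingUtilities}.
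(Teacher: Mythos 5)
Your proof is correct. The paper itself gives no proof of this lemma --- it is imported verbatim from the cited prior work --- but your argument (expanding $\SW(C)$ as a sum of $v_i(j)$ over the $c(c-1)$ ordered pairs in $C$, splitting into $f_C$ friendship pairs worth $1$ and $c(c-1)-f_C$ enemy pairs worth $-\tfrac{1}{n}$, then rearranging) is the standard and essentially unique derivation. Your remark on directedness is the right reading and is consistent with how the paper counts relations elsewhere, e.g.\ a bidirectional clique of size $h$ contributing $h(h-1)$ friendship relations in the proof of \Cref{lemma:octopusOPT}, and your sanity check against \Cref{example::computingUtilities} is accurate.
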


\begin{example}\label{ex::star}
Consider, for example, an $\FA$ instance where $\Gf$ is a star whose edges are directed from its center $i$ towards the leaves, that is, $F=\set{\set{i,j} \,\vert\, j\in\agents \setminus\set{i}}$. For such an instance, if we put the agents together in the grand coalition, we have $\SW(\GC)= \frac{n-1}{n}$. 
Clearly, in any optimum $\pi^*$, any node that is not in the same coalition as $i$ must be in a singleton coalition, otherwise, the $\SW$ of its coalition would be negative. Let $C$ be the coalition of $i$ in $\pi^*$, $\SW(\pi^*)=\SW(C)= (c-1)\cdot \left(1+\frac{1}{n}-\frac{c}{n}\right)$, which is maximized at $c= \frac{n+2}{2}$, for even $n$, and at $c= \frac{n+2}{2} \pm \frac{1}{2}$, otherwise. Note that the optimality does not depend on the edges direction.
\end{example}

\Cref{ex::star} shows that when the graph is particularly sparse it is more convenient to split weakly connected components rather than put all agents with all their friends.
In turn, when there exists a cluster of nodes $C\subset \agents$, that is, $C$ is a bidirectional clique in $\Gf$, whose nodes are weakly connected only to another node $i\in\agents \setminus C$, it is never convenient to split the agents in $C$. We call $C$ an {\em almost isolated clique} with {\em hinge node} $i$. We next formalize how almost isolated cliques place in an optimum outcome.

\begin{restatable}{lemma}{almostIsolatedOPT}\label{lemma:almostIsolatedOPT}
If $C$ is an almost isolated clique in $\Gf$ with hinge node $i$, for any optimal partition $\pi^*$ there exists $C' \in \pi^*$ such that $C \subseteq C'$. Furthermore, if $i\not\in C'$ then $C'=C$.
\end{restatable}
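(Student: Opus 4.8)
The plan is to work with the pairwise decomposition of the social welfare. For an outcome $\pi$, call a pair $\{a,b\}$ \emph{together} if $a,b$ lie in the same coalition; \Cref{prop-utility-coaltion} lets me write $\SW(\pi)=\sum_{\{a,b\}\text{ together}}\bigl(v_a(b)+v_b(a)\bigr)$, so any local modification only changes the set of together pairs and their contributions. Three facts about $C$ drive everything. (i) Since $C$ is a bidirectional clique, distinct $a,a'\in C$ are mutual friends, contributing $v_a(a')+v_{a'}(a)=2$. (ii) Because each $a\in C$ satisfies $\neigh(a)\subseteq C\cup\{i\}$, every $b\notin C\cup\{i\}$ is a mutual enemy of $a$: indeed $b\notin\neigh(a)$ forces $b\in E_a$, while $a\in F_b$ would put $b\in F_a^{-1}\subseteq\neigh(a)$, a contradiction; hence such a pair contributes $-\tfrac2n$. (iii) A pair $\{a,i\}$ with $a\in C$ contributes a value in $[-\tfrac2n,2]$, and these are the only contributions whose sign I cannot control.

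First I would prove a cleaning step that also yields the ``furthermore'' clause for free: in any optimum $\pi^*$, every coalition $D$ with $i\notin D$ satisfies $D\cap C=\emptyset$ or $D\subseteq C$. Otherwise $D$ contains some $a\in C$ and some $b\in D\setminus C$ with $b\neq i$, and splitting $D$ into $D\cap C$ and $D\setminus C$ deletes only (ii)-pairs, each worth $-\tfrac2n$, strictly increasing $\SW$ and contradicting optimality. Applying this with $D=C'$ (the coalition containing $C$) in the case $i\notin C'$ gives $C'\subseteq C$, hence $C'=C$, which is exactly the furthermore claim. A twin argument shows at most one coalition of $\pi^*$ is a nonempty subset of $C$ avoiding $i$, since merging two such coalitions creates only (i)-pairs worth $+2$. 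Consequently the members of $C$ lying outside $\pi^*(i)$ all sit in a single coalition $B\subseteq C$, while $C_0:=C\cap\pi^*(i)$ holds the rest.

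The crux is to rule out a genuine split, i.e.\ $C_0\neq\emptyset$ and $B\neq\emptyset$ simultaneously. Here I would compare $\pi^*$ with the outcome obtained by moving all of $B$ into $\pi^*(i)=C_0\cup\{i\}\cup J$, where $J$ collects the $r$ agents of $\pi^*(i)$ outside $C\cup\{i\}$. The new together pairs are exactly those between $B$ and $\pi^*(i)$, so the change is
\[
\Delta=2|B|\,|C_0|+\alpha-\tfrac{2}{n}|B|\,r,\qquad \alpha:=\sum_{b\in B}\bigl(v_b(i)+v_i(b)\bigr)\ge-\tfrac{2}{n}|B|.
\]
Optimality forces $\Delta\le0$; substituting the bound on $\alpha$ and dividing by $2|B|>0$ yields $n|C_0|\le r+1$. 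But $J$, $C_0$ and $\{i\}$ are disjoint subsets of $\pi^*(i)$ while $B$ lies outside it, so $r\le n-2$, and with $|C_0|\ge1$ this gives $n\le n-1$, a contradiction. Hence one of $C_0,B$ is empty: either $C\subseteq\pi^*(i)\ni i$, or $C=B$ is a coalition on its own with $i\notin C$; in both cases some $C'\in\pi^*$ contains all of $C$.

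I expect this last step to be the real obstacle: the $B$--$i$ pairs of kind (iii) can be friendships and contribute positively, so neither ``pull $C_0$ out to join $B$'' nor ``push $B$ in to join $C_0$'' is monotone on its own. What makes the push-in direction work unconditionally is the crude but decisive global bound $r\le n-2$ on the number of foreign agents that can share $i$'s coalition, which dominates the single uncontrolled quantity $\alpha$.
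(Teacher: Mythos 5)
Your proof is correct and follows essentially the same route as the paper's: both arguments first show that any coalition avoiding $i$ and meeting $C$ must be wholly contained in $C$ (and that at most one such coalition can exist), and then rule out a genuine split by merging the piece of $C$ outside $i$'s coalition into $\pi^*(i)$ --- your computation $\Delta \leq 0 \Rightarrow n|C_0| \leq r+1$ is exactly the paper's positive-cut inequality $(c-c_2)(n+1) - c_1 > 0$ read in contrapositive, with your $\alpha \geq -\tfrac{2}{n}|B|$ playing the role of the paper's worst-case treatment of the $B$--$i$ relations. The only cosmetic difference is bookkeeping: you sum pairwise contributions, while the paper phrases the same quantities via cut values and the no-positive-cut property.
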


Next, we consider more involved structures for $\Gf$ and explain how their optimal outcomes look like.
\begin{definition}[Octopus Graph]
Given an agent $i$ and $H\subseteq \agents\setminus\set{i}$, $\Gf=(\agents, F)$ is an $i$-centered {\em octopus graph} with the head $H$ if:
\begin{itemize}
    \item $H$ is a bidirectional clique in $\Gf$;
    \item for each $j \in H$, $\set{j,i}\in F$;
    \item for each $j \in  \agents \setminus{i}$ and $k \in \agents\setminus\left(\{i\} \cup H\right)$, none of $\set{j, k}$, $\set{k, j}$, $\set{k, i}$ belongs to $F$ while $\set{i, j}$ may belong to $F$.
\end{itemize}
\end{definition}
A picture of an $i$-centered octopus graph is given in \Cref{fig::octopus}.

\begin{restatable}{lemma}{octopusOPT}\label{lemma:octopusOPT}
Let $\Gf$ be an $i$-centered octopus graph with head $H$. If $|H| \geq \left\lceil \frac{n}{2} \right\rceil$, there exists a unique optimum consisting of the coalition $H \cup \set{i}$ and remaining agents put in singleton coalitions.
\end{restatable}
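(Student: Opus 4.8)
The plan is to first exploit the structural result on almost isolated cliques and then reduce to a short case analysis settled by direct computation with the social welfare formula of \Cref{prop-utility-coaltion}. The first step is to observe that the head $H$ is an almost isolated clique whose hinge node is $i$: $H$ is a bidirectional clique by definition, and for every $j\in H$ the octopus conditions force $\neigh(j)\subseteq(H\setminus\set{j})\cup\set{i}$, since no friendship edge joins a head node to an agent of $L:=\agents\setminus(\set{i}\cup H)$ in either direction (the legs declare no friends and are befriended only, possibly, by $i$). Applying \Cref{lemma:almostIsolatedOPT} to $H$, every optimum $\pi^*$ contains a coalition $C'\supseteq H$, and moreover either $i\in C'$ or $C'=H$ with $i$ placed elsewhere. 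This leaves exactly two configurations to compare against the candidate optimum $\pi_0$ consisting of $H\cup\set{i}$ together with singletons.

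Next I would write everything in closed form. Setting $h=|H|$ and $b=|F_i\cap H|$, \Cref{prop-utility-coaltion} gives (because $L$ induces no friendship edges and each leg can receive at most the single edge $i\to k$) the clean expressions $\SW(H\cup\set{i})=h^2+b+\tfrac{b-h}{n}$ and $\SW(H)=h(h-1)$, while for any $S\subseteq L$ the coalition $H\cup\set{i}\cup S$ improves on $H\cup\set{i}$ by exactly $|F_i\cap S|\,(1+\tfrac1n)-\tfrac{|S|(2h+1+|S|)}{n}$. The crucial point is that a leg is a ``sink'': merging it into a coalition of size $c$ creates at most one new friendship edge (from $i$) but always adds $2c$ enemy ordered pairs, so its marginal contribution is at most $1+\tfrac1n-\tfrac{2c}{n}$. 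For $c=h+1$ this is at most $1-\tfrac1n-\tfrac{2h}{n}\le-\tfrac1n<0$, where the last inequality uses precisely that $|H|\ge\ceil{n/2}$ forces $2h\ge n$; this is exactly the threshold at which the enemy penalty of absorbing a leg overtakes the lone friendship it can provide.

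Then I would close both cases. When $i\in C'$, writing $C'=H\cup\set{i}\cup S$, the displayed gain is bounded above (using $|F_i\cap S|\le|S|$ and $2h\ge n$) by $|S|\bigl(1-\tfrac{2h+|S|}{n}\bigr)<0$ whenever $S\neq\emptyset$, so detaching the legs as singletons strictly increases $\SW$; hence $C'=H\cup\set{i}$. When $C'=H$, placing $i$ with a set $S\subseteq L$ of its leg-friends gives total welfare at most $h(h-1)+|S|\bigl(1-\tfrac{|S|}{n}\bigr)$, and a direct subtraction against $\SW(\pi_0)$ yields a difference of the form $P+\tfrac{Q}{n}$ with integer part $P=h+b-|F_i\cap S|\ge 2h-n+1\ge1$ (since $|F_i\cap S|\le|L|=n-1-h$) and $Q\ge1-n$, so the difference is at least $\tfrac1n>0$; thus this configuration is strictly dominated. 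Finally, any coalition consisting solely of legs has $\SW\le0$ with equality only for singletons, so every leg outside $C'$ must sit alone. Combining the three observations shows $\pi_0$ is the unique optimum. I expect the main obstacle to be the careful bookkeeping of the $\pm\tfrac1n$ terms needed to upgrade each inequality to a strict one, since the lemma asserts \emph{uniqueness} and not merely optimality, and the boundary value $|H|=\ceil{n/2}$ is exactly where the argument is tightest.
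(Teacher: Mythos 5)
Your proof is correct and follows essentially the same route as the paper: both invoke \Cref{lemma:almostIsolatedOPT} to fix $H$ inside one coalition, observe that legs not grouped with $i$ must be singletons, and then rule out the two remaining configurations (the paper's $\pi^2$ and $\pi^3$) by direct social-welfare computations that are strict precisely because $|H|\ge\lceil n/2\rceil$. The only differences are cosmetic algebraic packaging (your $P+\tfrac{Q}{n}$ bookkeeping versus the paper's inequality manipulations), not a different argument.
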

\begin{proof}[Sketch]
Let us start by noticing that $H$ is an almost isolated clique with hinge $i$. By \Cref{lemma:almostIsolatedOPT}, in the social optimum, all agents from $H$ will end up in the same coalition. Moreover, for each agent $k \in \agents\setminus\left(\{i\} \cup H\right)$, $k$ can be weakly connected only to $i$, so, if $k$ is not in the same coalition as $i$, then $k$ forms a singleton coalition. This leaves us with three possible types of optimal partition:
\begin{enumerate}
    \item[$\pi^1$] where $H$ and $i$ are in the same coalition, while all remaining agents are in singletons;
    \item[$\pi^2$] where agents from $H$ form a coalition, while $i$ is in a different coalition together with some $C_1 \subseteq \agents\setminus\left(\{i\} \cup H\right)$, and remaining agents are in singletons;
    \item[$\pi^3$] where $H$, $i$ and some $C_2 \subseteq\agents\setminus\left(\{i\} \cup H\right)$ form one coalition, while all other agents are in singletons.
\end{enumerate} 
Let us compare $\pi^1$ and $\pi^2$. The number of positive relationships between $H$ and $i$ is at least $\modulus{H}$ while the positive connections between $i$ and $C_1$ are at most $\modulus{C_1}$. Since $\modulus{C_1}\leq n - \modulus{H} -1 \leq  n- \ceil{\frac{n}{2}} -1 < \modulus{H}$, it is strictly more convenient to put $i$ in coalition with $H$ rather than with $C_1$, showing $\SW(\pi^1) > \SW(\pi^2)$.

Let us compare $\pi^1$ and $\pi^3$. 
Consider the coalition $H\cup C_2 \cup \set{i}$. The friendship relationships between $H\cup \set{i}$ and $C_2$ are only the ones in $F_i\cap C_2$, which are at most $\modulus{C_2}$.  So, if we remove and split $C_2$ into singletons, the loss in social welfare will be of at most
$\modulus{C_2}\cdot\left(1 + \frac{1}{n}\right) - \frac{2\cdot(\modulus{H}+1)\cdot\modulus{C_2}}{n} \leq \modulus{C_2}\cdot \frac{n+1 - 2\cdot\frac{n}{2}-2}{n} <0$, where the first inequality holds as $\modulus{H}\geq \frac{n}{2}$. Therefore, $\SW(\pi^1) > \SW(\pi^3)$. 
\end{proof}

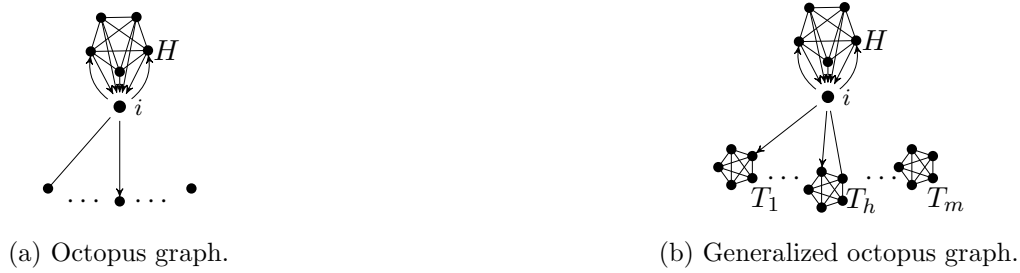
\begin{figure}[tb]
	\centering
	\def \variable {0.25cm}
        \def \var {0.4cm}
	\pgfdeclarelayer{bg}    
	\pgfsetlayers{bg,main} 

\begin{subfigure}[t]{0.42\textwidth}
\centering
	\begin{tikzpicture}[-,>=stealth',shorten >=0pt,auto,node distance=\variable,
	thick,main node/.style={circle,draw,fill, minimum size=1pt, scale=0.3}]

	\foreach \numb in {1,2,3,4,5} {
		\node[main node] (\numb) at ({72*\numb +55}:\var) {};
	}
 \foreach \numb in {6,7,8,9,10} {
    \node[main node] (\numb) [yshift= -7cm, color=white] at ({72*\numb +108}:\variable) {};
	}
    
	\node[main node] (i) [below of = 3, yshift= -1.3cm , color=white, minimum size=35pt] {};
    \node[main node] (i2) [below of = 3, yshift= -1.3cm ,  circle,draw,fill, minimum size=13pt]  {};
	\node () [ right of =i2, font=\footnotesize] {$i$};
    \node[main node] (a) [below of = 3, yshift= -5.5cm] {};
    \node[main node] (b) [below left of = 3, yshift= -5cm, xshift= -3 cm ] {};
    \node[] () [right of = a, xshift=0.2cm] {$\cdots$};
    \node[main node] (c) [below right of = 3, yshift= -5cm, xshift= 3cm ] {};
    \node[] () [left of = a, xshift=-0.2cm] {$\cdots$};
    \node[] () [ right of =4] {$H$};

 \path[every node/.style={font=\sffamily\small}, edge/.style ={dashed}]

    \foreach \B in {1,2,3,4,5}{   
		
		(\B) edge [line width=0.3pt, ->] node {} (i)
	}
	\foreach \A in {1,2,3,4,5}{
		\foreach \B in {1,2,3,4,5}{   
			\ifnum \A<\B
			(\A) edge [line width=0.3pt] node {} (\B)
			\fi
		}
	}
 (i) edge [line width=0.3pt, ->, bend right] node {} (4)
(i) edge [line width=0.3pt, ->, bend left] node {} (2)
(i) edge [line width=0.3pt, ->] node {} (a)
(i) edge [line width=0.3pt, -] node {} (b)

	;
\end{tikzpicture}
\caption{Octopus graph.}
\label{fig::octopus}
\end{subfigure}
\hfill
\begin{subfigure}[t]{0.42\textwidth}
\centering
	\begin{tikzpicture}[-,>=stealth',shorten >=0pt,auto,node distance=\variable,
	thick,main node/.style={circle,draw,fill, minimum size=1pt, scale=0.3}]

	\foreach \numb in {1,2,3,4,5} {
		\node[main node] (\numb) at ({72*\numb +55}:\var) {};
	}
    
	\node[main node] (i) [below of = 3, yshift= -1.3cm , color=white, minimum size=35pt] {};
        \node[main node] (i2) [below of = 3, yshift= -1.3cm ,  circle,draw,fill, minimum size=13pt]  {};
	\node[] () [ right of =i2, font=\footnotesize] {$i$};
  
	\foreach \numb in {6,7,8,9,10} {
    \node[main node] (\numb) [yshift= -7cm] at ({72*\numb +108}:\variable) {};
	}

	\foreach \numb in {11,12,13,14,15}{
    \node[main node] (\numb) [yshift= -6cm, xshift= 4cm] at ({72*\numb +108}:\variable) {};
	}
 
 	\foreach \numb in {16,17,18,19,20}{
    \node[main node] (\numb) [yshift= -6cm, xshift= -4cm] at ({72*\numb +108}:\variable) {};
	}

 	\node[] () [ right of =4] {$H$};
	\node[] () [ right of =8] {$T_h$};
        \node[] () [ right of =9, xshift= 0.3cm] {$\cdots$};
        \node[] () [ left of =9, xshift= -0.5cm] {$\cdots$};
	\node[] () [below right of =13, yshift=-0.1cm ] {$T_m$};
	\node[] () [below right of =18, yshift=-0.1cm ] {$T_1$};
 \path[every node/.style={font=\sffamily\small}, edge/.style ={dashed}]
	\foreach \A in {1,2,3,4,5}{
		\foreach \B in {1,2,3,4,5}{   
			\ifnum \A<\B
			(\A) edge [line width=0.3pt] node {} (\B)
			\fi
		}
	}

	\foreach \A in {6,7,8,9,10}{
		\foreach \B in {6,7,8,9,10}{    
			\ifnum \A<\B
			(\A) edge [line width=0.3pt] node {} (\B)
			\fi
		}
	}

        \foreach \A in {11,12,13,14,15}{
		\foreach \B in {11,12,13,14,15}{    
			\ifnum \A<\B
			(\A) edge [line width=0.3pt] node {} (\B)
			\fi
		}
	}	
        \foreach \A in {16,17,18,19,20}{
		\foreach \B in {16,17,18,19,20}{    
			\ifnum \A<\B
			(\A) edge [line width=0.3pt] node {} (\B)
			\fi
		}
	}	
    \foreach \B in {1,2,3,4,5}{   
		
		(\B) edge [line width=0.3pt, ->] node {} (i)
	}

 (i) edge [line width=0.3pt, ->, bend right] node {} (4)
(i) edge [line width=0.3pt, ->, bend left] node {} (2)
(i) edge [line width=0.3pt, ->] node {} (19)
(i) edge [line width=0.3pt, -] node {} (9)
(i) edge [line width=0.3pt, ->] node {} (10)
	;
\end{tikzpicture}
\caption{Generalized octopus graph.}
\label{fig::generalizedOctopus}
\end{subfigure}
\caption{Octopus graph structures having center $i$. Undirected edges represent bidirectional edges in $\Gf$.}
\label{fig::graphStructures}
\end{figure}

We now further generalize the definition of octopus graph. 

\begin{definition}[Generalized Octopus Graph]
Given an agent $i\in\agents$ and $\set{H,T_1, \dots , T_m}$, a disjoint partition of $\agents\setminus\set{i}$; $\Gf=(\agents, F)$ is an $i$-centered {\em generalized octopus graph} with the head $H$ and the tentacles $T_1, \dots , T_m$ if, for each $l \in [m]$:
\begin{itemize}
    \item $H$ and $T_l$  are bidirectional cliques in $\Gf$;
    \item for each $j \in H$, $\set{j,i}\in F$;
    \item for each $j \in \agents\setminus \set{T_l \cup \set{i}}$ and $k \in T_l$ none of $\set{j, k}$, $\set{k, j}$, $\set{k, i}$ belongs to $F$ while $\set{i, j}$ may belong to $F$.
\end{itemize}
\end{definition}

Given a node $i$, we denote by $\goct(i)$ the set of all possible $i$-centered generalized octopus graphs. In \Cref{fig::generalizedOctopus}, we draw an example of an $i$-centered generalized octopus graph. Let us note that an octopus graph is a generalized octopus graph having all tentacles of size $1$. Furthermore, if $\Gf(d_i, \declared_{-i})\in \goct(i)$, for any declaration $d'_i\in\declarations_i$, $\Gf(d'_i, \declared_{-i})\in \goct(i)$; in fact, in the definition there is no constraint on how the set of friends of $i$ should be.

\begin{lemma}\label{lemma:genOctopusOPTtentacles}
Let $\Gf$ be a generalized $i$-centered octopus graph with head $H$ and tentacles $T_1, \dots, T_m$. If $\pi^*=\set{H\cup\set{i}, T_1, \dots, T_m}$ is an optimum outcome, then $|H| \geq \max_{l\in[m]}\frac{\modulus{F_{i} \cap  T_l}}{|T_l|}\cdot \frac{n+1}{2} - 1$.
\end{lemma}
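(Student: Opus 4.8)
The plan is to leverage the optimality of $\pi^*$ against one explicit local deviation: merging a single tentacle into the central coalition. Fix an arbitrary tentacle $T_l$ and let $\pi_l$ be the partition obtained from $\pi^*$ by replacing the two coalitions $H\cup\set{i}$ and $T_l$ with their union $H\cup\set{i}\cup T_l$, leaving every other coalition unchanged. Since $\pi^*$ is optimal we must have $\SW(\pi_l)\le\SW(\pi^*)$, so the entire argument reduces to computing this difference with \Cref{prop-utility-coaltion} and rearranging it into the claimed bound. Because all coalitions other than these two are identical in $\pi^*$ and $\pi_l$ (and, by the octopus structure, carry no friendship edges to $T_l$), only the three coalitions $H\cup\set{i}$, $T_l$, and $H\cup\set{i}\cup T_l$ matter.

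First I would do the edge bookkeeping. By the definition of the generalized octopus graph, the only friendship relations linking $H\cup\set{i}$ and $T_l$ are the directed edges from $i$ to its friends inside $T_l$, of which there are exactly $\modulus{F_i\cap T_l}$: members of $T_l$ regard no agent outside $T_l$ as a friend (in particular not $i$, since $\set{k,i}\notin F$ for $k\in T_l$), and members of $H$ regard only agents of $H\cup\set{i}$ as friends. Writing $h=\modulus{H}$ and $t_l=\modulus{T_l}$, the merge therefore creates exactly $\modulus{F_i\cap T_l}$ new friendship relations while joining coalitions of sizes $h+1$ and $t_l$ into one of size $h+1+t_l$. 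Applying \Cref{prop-utility-coaltion} to each coalition and subtracting, every term counting the internal friendships of $H$, the edges incident to $i$ within $H\cup\set{i}$, and the internal friendships of $T_l$ cancels, and the difference should collapse to
\[
\SW(\pi_l)-\SW(\pi^*) \;=\; \modulus{F_i\cap T_l}\left(1+\tfrac1n\right)\;-\;\frac{2\,t_l\,(h+1)}{n}\,,
\]
where the size-dependent penalty is
$-\big[(h+1+t_l)(h+t_l)-(h+1)h-t_l(t_l-1)\big]/n=-2t_l(h+1)/n$.

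Finally, imposing $\SW(\pi_l)-\SW(\pi^*)\le 0$ yields $\modulus{F_i\cap T_l}(n+1)\le 2t_l(h+1)$, which rearranges to $h\ge \frac{\modulus{F_i\cap T_l}}{t_l}\cdot\frac{n+1}{2}-1$; taking the maximum over $l\in[m]$ gives the statement. I expect the main obstacle to be precisely the edge accounting in the middle step: one must verify that no friendship edge other than the $\modulus{F_i\cap T_l}$ edges out of $i$ crosses between the central coalition and $T_l$ (in particular that no $H$–$T_l$ friendships exist), since a miscount there directly corrupts the coefficient of the positive term. The collapse of the quadratic size penalty into the clean quantity $2t_l(h+1)$ is the other place where an arithmetic slip is easy, so I would double-check that expansion explicitly.
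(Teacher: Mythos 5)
Your proposal is correct and follows essentially the same route as the paper's proof: merge $T_l$ into $H\cup\set{i}$, invoke optimality to get $\SW(\pi_l)-\SW(\pi^*)\le 0$, and rearrange; indeed your expression $\modulus{F_i\cap T_l}\left(1+\frac{1}{n}\right)-\frac{2t_l(h+1)}{n}$ is algebraically identical to the paper's $g_l-\frac{2|T_l|-g_l}{n}-\frac{2|T_l|\cdot|H|}{n}$. Your edge bookkeeping (only the $\modulus{F_i\cap T_l}$ edges from $i$ cross between the two merged coalitions) and the collapse of the size penalty to $2t_l(h+1)$ are both verified correct.
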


\begin{proof}
Let us set $g_l=\modulus{F_{i} \cap  T_l}$.
Consider the partition $\pi$ obtained from $\pi^*$ by merging the coalitions $H\cup\set{i}$ and $T_l$. Being $\pi^*$ optimum,
$
 \SW(\pi) - \SW(\pi^*) = g_l- \frac{2|T_l| - g_l}{n} - \frac{2|T_l|\cdot|H|}{n} \leq  0 \, .
$
Therefore, %$\forall l \in [m]$,
$
|H| \geq \frac{g_l(n+1)}{2|T_l|} - 1$, 
and hence it holds for the maximum.
\end{proof}

Next, we provide an interesting connection between the optimum for arbitrary instances and the one for octopus graphs.

\begin{lemma}\label{lemma:lookOnlyOctopus}
If $\pi^*$ is an optimum partition for $\instance=(d_i, d_{-i})$, then, $\pi^*$ is also optimum for $\instance'=(d_i, d'_{-i})$, where $\Gf(\instance')\in\goct(i)$ 
with the head $\pi(i) \setminus \set{i}$ and the remaining coalitions of $\pi$ being the tentacles.
\end{lemma}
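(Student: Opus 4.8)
The plan is to construct the declaration $d'_{-i}$ explicitly, rewrite every social welfare as an affine function of the number of friendship edges lying \emph{inside} coalitions, and then combine the optimality of $\pi^*$ on $\instance$ with a purely combinatorial comparison of the two edge sets. Concretely, I would fix $H=\pi^*(i)\setminus\set{i}$ and let $T_1,\dots,T_m$ be the coalitions of $\pi^*$ other than $\pi^*(i)$; then I define $d'_{-i}$ so that every agent of a tentacle $T_l$ declares exactly the other members of $T_l$ as friends, and every agent of $H$ declares exactly $(H\setminus\set{\cdot})\cup\set{i}$ as friends. A direct check against the definition confirms that $\Gf(\instance')\in\goct(i)$ with head $H$ and tentacles $T_1,\dots,T_m$, that $\pi^*=\set{H\cup\set{i},T_1,\dots,T_m}$, and that $i$'s outgoing edges are untouched because $d_i$ is kept fixed (so $i$ may still befriend agents in the tentacles, exactly as the octopus definition permits).

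The main device is the following additive identity. By \Cref{prop-utility-coaltion}, for any instance $\mathcal J$ on $\agents$ and any partition $\sigma$ one has
\[
\SW^{\mathcal J}(\sigma)=\left(1+\tfrac1n\right)\cdot e_\sigma(\mathcal J)-\tfrac1n\sum_{C\in\sigma}\modulus{C}\bigl(\modulus{C}-1\bigr),
\]
where $e_\sigma(\mathcal J)$ denotes the number of directed friendship edges of $\mathcal J$ internal to the coalitions of $\sigma$. Since the quadratic term depends only on $\sigma$ and not on the instance, for every $\sigma$ we get $\SW^{\instance'}(\sigma)-\SW^{\instance}(\sigma)=\left(1+\tfrac1n\right)\bigl(e_\sigma(\instance')-e_\sigma(\instance)\bigr)$. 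Writing $\Delta(\sigma)=e_\sigma(\instance')-e_\sigma(\instance)$, the goal reduces to showing that $\pi^*$ maximizes $\Delta$, because then
\[
\SW^{\instance'}(\pi^*)-\SW^{\instance'}(\sigma)=\bigl[\SW^{\instance}(\pi^*)-\SW^{\instance}(\sigma)\bigr]+\left(1+\tfrac1n\right)\bigl[\Delta(\pi^*)-\Delta(\sigma)\bigr],
\]
where the first bracket is nonnegative by optimality of $\pi^*$ on $\instance$ and the second by the claimed maximality.

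To prove $\Delta(\pi^*)\ge\Delta(\sigma)$, I would analyze the symmetric difference of the two friendship-edge sets, $G^+=F'\setminus F$ and $G^-=F\setminus F'$, relative to the coalitions of $\pi^*$. Inside each $\pi^*$-coalition the graph $\Gf(\instance')$ contains all edges of $\Gf(\instance)$ — tentacles and $H$ become complete cliques, the $H\to i$ edges are all present, and $i$'s outgoing edges are unchanged — so no edge internal to a $\pi^*$-coalition is ever deleted; hence every edge of $G^-$ joins two distinct coalitions of $\pi^*$. Conversely, the only cross-coalition edges of $\instance'$ are those leaving $i$, which already belong to $\instance$, so every edge of $G^+$ is internal to a $\pi^*$-coalition. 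Consequently $\Delta(\pi^*)=\modulus{G^+}$ (all of $G^+$ counted, none of $G^-$), while for an arbitrary $\sigma$, $\Delta(\sigma)=\modulus{G^+\cap I_\sigma}-\modulus{G^-\cap I_\sigma}\le\modulus{G^+}$, where $I_\sigma$ is the set of ordered pairs inside coalitions of $\sigma$. This gives $\Delta(\pi^*)\ge\Delta(\sigma)$ and finishes the argument.

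I expect the edge-set bookkeeping of the last paragraph to be the only real obstacle: everything hinges on verifying cleanly that passing from $\instance$ to $\instance'$ \emph{only adds} within-$\pi^*$-coalition edges and \emph{only deletes} cross-$\pi^*$-coalition edges, which in turn relies on $d_i$ being frozen so that $i$'s (possibly cross-coalition) outgoing edges are common to both instances. Once that is pinned down, the affine-welfare identity and the optimality of $\pi^*$ on $\instance$ combine immediately.
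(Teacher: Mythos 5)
Your proof is correct and takes essentially the same route as the paper's: your $G^+$, $G^-$, and $\Delta(\sigma)$ bookkeeping coincides exactly with the paper's $p$, $q$, $p'$, $q'$ accounting, resting on the same key observation that passing to the octopus instance only adds edges internal to $\pi^*$'s coalitions and only deletes cross-coalition edges, so the welfare gain $\left(1+\tfrac{1}{n}\right)\Delta$ is maximized at $\pi^*$. The only difference is presentational — you spell out the construction of $d'_{-i}$ and the affine welfare identity explicitly, which the paper leaves implicit.
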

In a nutshell: Strengthening friendships within and enmities between the coalitions of $\pi$ maintains optimality.

\begin{proof}
Let us denote by $F^\instance$ and $F^{\instance'}$  the friendship relationship in $\instance$ and $\instance'$, respectively. 
We set $p=\modulus{F^{\instance'} \setminus F^{\instance}}$ and $q= \modulus{F^{\instance}\setminus F^{\instance'}}$, which denote the number of added and removed edges when transforming $\Gf(\instance)$ into $\Gf(\instance')$. 
Then, we have $\SW^{\instance'}(\pi) - \SW^{\instance}(\pi) = p\cdot\left(1+\frac{1}{n}\right)$ since the edges we added are within coalitions and the edges we removed are between coalitions. Consider now an outcome $\pi' \neq \pi$ and let $p'$ and $q'$ be the number of the added and removed edges, whose endpoints are in the same coalition in $\pi'$, when transforming $\Gf(\instance)$ into $\Gf(\instance')$.
In this case, being $p'\leq p$,
\[
\SW^{\instance'}(\pi') - \SW^{\instance}(\pi') = p'\cdot\left(1 + \frac{1}{n}\right) - q'\cdot\left(1 + \frac{1}{n}\right) \leq p\cdot\left(1 + \frac{1}{n}\right) \, .
\]
Therefore,
$\SW^{\instance'}(\pi) - \SW^{\instance'}(\pi') \geq \SW^{\instance}(\pi)  - \SW^{\instance}(\pi') \geq 0$, for any $\pi'\in \Pi$,
showing the optimality of  $\pi$ for $\instance'$.
\end{proof}
\section{An optimal and NOM mechanism}
In~\cite{flammini2022strategyproof}, it has been shown that no strategyproof mechanism can have an approximation better than $2$. In contrast, we next show there is a way to simultaneously guarantee optimality and NOM.

Let us first introduce our optimal mechanism.
\begin{mechanism}\label{mech:OPT}
    It always returns an optimum partition with the smallest number of coalitions, ties among partitions with the same number of coalitions are broken arbitrarily.
\end{mechanism}

Clearly, the mechanism returns an optimal partition; however, we need to select a specific optimum, if the instance admits more than one optimum, in order to show the following:

\begin{theorem}\label{thm:optIsNOM}
   Mechanism \ref{mech:OPT} is NOM.
\end{theorem}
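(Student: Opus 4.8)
The plan is to check the two requirements of \Cref{def:NOM} separately, for an arbitrary agent $i$ with true type $t_i$, true friend/enemy counts $f_i,e_i$, and an arbitrary alternative declaration $d_i$. The workhorse is \Cref{lemma:lookOnlyOctopus}: given any $\declared_{-i}$ and any optimum $\pi^*$ of $\mech$, the partition $\pi^*$ stays optimal for a profile whose friendship graph lies in $\goct(i)$ and on which $i$'s \emph{true} utility is unchanged (the reduction keeps $d_i$ fixed). Hence both the $\max$ and the $\min$ of $u_i$ over $\Pi_i(\cdot,\mech)$ are attained on $i$-centered generalized octopus instances, and I may reason only about those.

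For Condition~1 I claim $\max_{\pi\in\Pi_i(t_i,\mech)}u_i(\pi)=f_i$, which is the largest utility any outcome can give $i$ (she has $f_i$ true friends and gains nothing from enemies), so in particular $\max_{\pi\in\Pi_i(d_i,\mech)}u_i(\pi)\le f_i$ and the condition holds. The matching lower bound needs a single profile: let every $j\in F_i$ declare as friends exactly $i$ and the other members of $F_i$, and let every $j\in E_i$ declare everybody an enemy. Then $F_i\cup\set{i}$ is a bidirectional clique disconnected from the isolated vertices $E_i$, so splitting it strictly loses friendships and adjoining any enemy strictly lowers $\SW$; by \Cref{lemma:almostIsolatedOPT} the unique optimum is $\set{F_i\cup\set{i}}$ plus singletons, which \Cref{mech:OPT} outputs, giving $i$ utility exactly $f_i$.

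Condition~2 is the crux. I first isolate a structural invariant that is independent of $d_i$: in $\mech(d_i,\declared_{-i})$ the coalition of $i$ always contains the whole octopus head $H$, i.e.\ every agent who declared $i$ a friend. Indeed $H$ is an almost isolated clique with hinge $i$, so by \Cref{lemma:almostIsolatedOPT} it lies in one coalition, and merging it with $i$ rather than leaving $i$ apart changes $\SW$ by $|H|(1-\tfrac1n)>0$ no matter how many head members $i$ declares as friends. Consequently the head contributes $|H\cap F_i|-|H\cap E_i|/n$ to $i$'s true utility, a quantity fixed by the other agents and untouched by $i$'s report. The only remaining members of $i$'s coalition come from tentacles, and—since a tentacle sends no edge to $i$—a tentacle is merged only through friendships that $i$ \emph{declares} into it, with the threshold of \Cref{lemma:genOctopusOPTtentacles}. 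A short computation shows that under truthful reporting any merged tentacle $T_l$ carries at least one true friend ($g_l=|F_i\cap T_l|\ge1$) and hence raises $i$'s utility by $g_l-(|T_l|-g_l)/n>0$; so under $t_i$ the worst case is driven purely by adversarial head enemies, with tentacles only ever helping $i$, and the fewest-coalition tie-break can only merge further friend-bearing tentacles into $i$'s coalition.

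It then remains to show that every misreport exposes $i$ to a worst case no better than this. The approach is, for each $d_i\neq t_i$, to build $\declared_{-i}$ that reproduces the most damaging admissible head and then exploits the gap between $t_i$ and $d_i$ in the tentacles: if $d_i$ hides a true friend $j$, place $j$ in a tentacle that truthful reporting would (weakly) profitably merge—so truth keeps $j$ while $d_i$ forfeits him; if $d_i$ invents a friend $k\in E_i$, place $k$ in a tentacle merged precisely because of the false edge $i\to k$, injecting the true enemy $k$ into $i$'s coalition. Either way the non-head contribution to $i$ becomes $\le$ its truthful counterpart, yielding $\min_{\pi\in\Pi_i(d_i,\mech)}u_i(\pi)\le\min_{\pi\in\Pi_i(t_i,\mech)}u_i(\pi)$. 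I expect the technical heart—and the main obstacle—to be making this last construction rigorous: guaranteeing that the intended coalition is the \emph{fewest-coalition} optimum under both $t_i$ and $d_i$ (forcing careful use of the size thresholds in \Cref{lemma:octopusOPT} and \Cref{lemma:genOctopusOPTtentacles} and of the tie-breaking rule), and separating the degenerate regime where the adversary can already pin $i$ to her floor utility $-e_i/n$—so truth and every misreport coincide at the minimum—from the regime in which truthful reporting strictly recovers friends a misreport would lose.
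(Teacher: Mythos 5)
Your Condition~1 argument is correct and coincides with the paper's. The fatal problem is Condition~2: the ``structural invariant'' it rests on --- that on any generalized octopus instance the output of \ref{mech:OPT} always places the whole head $H$ inside $i$'s coalition --- is false. Your justification ($\SW$ gain $|H|(1-\tfrac1n)>0$) only compares against $i$ being a \emph{singleton}; it ignores the cross-enmities between $H$ and the tentacle members that may already sit in $i$'s coalition. Concretely, let $H=\set{j}$ (one agent declaring $i$ a friend), let there be a single tentacle $T$ which is a bidirectional clique of size $n-2$, and let $i$ (truthfully) declare all of $T$ friends and $j$ an enemy. Then merging $\set{j}$ into $\set{i}\cup T$ changes the welfare by $1-\frac{1+2(n-2)}{n}=\frac{3-n}{n}<0$, so the mechanism outputs $\set{\set{i}\cup T,\set{j}}$ and the head is \emph{excluded}. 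Worse for your argument, head membership depends on $d_i$: in the same instance, if $i$ instead declares everyone an enemy, the output is $\set{\set{i,j},T}$. So the head's contribution to $i$'s utility is neither guaranteed nor ``untouched by $i$'s report,'' and your decomposition (fixed head term plus nonnegative tentacle terms) collapses.

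What is missing is exactly the quantitative core of the paper's proof (\Cref{lemma:bestWorstCaseOptMech}): when $e_i<\ceil{\frac{n}{2}}$, \emph{no} output of \ref{mech:OPT} under truthful reporting puts $i$ in a coalition with fewer than $\ceil{\frac{n}{2}}-e_i$ friends; the truthful minimum is therefore $\bigl(\ceil{\frac{n}{2}}-e_i\bigr)-\frac{e_i}{n}$, not the floor $-\frac{e_i}{n}$ suggested by ``the worst case is driven purely by adversarial head enemies.'' Proving this lower bound is the hard step, done in the paper via \Cref{lemma:lookOnlyOctopus}, the strict (tie-breaking-aware) version of \Cref{lemma:genOctopusOPTtentacles} (\Cref{lemma:genOctopusOPTtentaclesForMech}), and a discriminant/integer-gap computation. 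With it, Condition~2 follows at once: for \emph{any} $d_i$, an octopus with head $E_i\cup X$, $|E_i\cup X|=\ceil{\frac{n}{2}}$, forces (by \Cref{lemma:octopusOPT}, whose hypothesis does not constrain $i$'s declaration) an outcome of exactly that truthful-minimum value, so the misreport minimum cannot exceed the truthful minimum. Your alternative --- matching each misreport with a tailor-made instance in which it does worse than truth \emph{on that instance} --- does not yield Condition~2, which compares against the truthful minimum over \emph{all} instances; without the quantitative characterization that comparison cannot be closed. (A further, smaller gap: optimality of $\pi^*$ on the octopus instance given by \Cref{lemma:lookOnlyOctopus} does not by itself imply that the fewest-coalitions tie-break outputs $\pi^*$ there; this is precisely why the paper needs \Cref{lemma:genOctopusOPTtentaclesForMech}.)
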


To show the theorem, we at first need to understand which are the worst/best outcomes for $i$ in $\Pi_i(t_i, \ref{mech:OPT})$, the space of possible outcomes of \ref{mech:OPT} when $i$ reports $t_i$. We will then compare their utility for $i$ with the one of the worst/best outcomes in $\Pi_i(d_i, \ref{mech:OPT})$ for any other feasible $d_{i}$.
Recall that $e_i=\modulus{E_i}$ and $f_i=\modulus{F_i}$ are the sizes of the truthful set of friends and enemies of $i$, respectively. 

\begin{lemma}\label{lemma:bestWorstCaseOptMech}
For any agent $i\in\agents$, among the outcomes in $\Pi_i(t_i, \ref{mech:OPT})$:
\begin{enumerate}
     \item\label{item:bestCase} any outcome where $i$ is put in the coalition $\set{i}\cup F_i$ maximizes the utility of $i$;
     \item\label{item:worstCase} any outcome where $i$ is put in a coalition with $E_i$ and $\max\set{\ceil{\frac{n}{2}}-e_i, 0}$ friends minimizes the utility of $i$.
\end{enumerate}
\end{lemma}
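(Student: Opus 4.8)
The goal is to characterize the best and worst outcomes for agent $i$ across all partitions that Mechanism~\ref{mech:OPT} can produce when $i$ truthfully reports $t_i$, as the other agents' declarations $\declared_{-i}$ range over $\declarations_{-i}$.

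My plan for Item~\ref{item:bestCase}: First I would establish an upper bound on $i$'s utility in any feasible outcome. Since $i$ reports $t_i$ truthfully, she has exactly $f_i$ friends, so in any coalition the number of friends of $i$ is at most $f_i$, and her utility is at most $f_i$ (attained only when she has zero enemies alongside). The partition $\set{i}\cup F_i$ gives $i$ exactly $f_i$ friends and no enemies, so $u_i = f_i$, the maximum possible value of $i$'s utility under any declaration. The key remaining obligation is \emph{realizability}: I must exhibit a declaration $\declared_{-i}$ of the others for which Mechanism~\ref{mech:OPT} actually outputs a partition placing $i$ together with precisely $F_i$ (and no enemy). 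The natural construction is to let every agent in $F_i$ reciprocate $i$'s friendship and declare everyone else an enemy, and let all agents outside $\set{i}\cup F_i$ declare everyone an enemy. Then the friendship graph is an $i$-centered octopus (indeed a clique $F_i\cup\set i$ plus isolated vertices); invoking \Cref{lemma:octopusOPT} when $\modulus{F_i\cup\set i}\geq\ceil{n/2}$ — or more directly \Cref{prop-utility-coaltion} together with \Cref{lemma:almostIsolatedOPT} in general — the unique/selected optimum isolates the clique $\set{i}\cup F_i$, giving $u_i=f_i$. I expect \emph{this realizability argument to be the main obstacle}, because when $F_i$ is small the coalition $\set i\cup F_i$ need not be the head of a large head octopus, and one must argue the tie-breaking rule (fewest coalitions) still yields a partition in which $i$ sits with all and only her friends; the reciprocation construction makes each friend an almost-isolated clique member so that splitting is never beneficial.

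My plan for Item~\ref{item:worstCase}: I would first compute the utility of the claimed worst coalition. Placing $i$ with all $e_i$ enemies and $\max\set{\ceil{n/2}-e_i,0}$ friends gives, by the $\FA$ valuation, utility $\max\set{\ceil{n/2}-e_i,0}\cdot 1 - e_i\cdot\frac1n$. I then need two things. First, \textbf{lower bound}: that no feasible outcome of the mechanism can force $i$ below this value. Here the crucial observation is that the mechanism always returns an \emph{optimum}, so $i$ cannot be placed in a coalition whose contribution drags the social welfare below what a better configuration would achieve; in particular, if $i$ had very few friends in her coalition, splitting off her enemies would strictly raise social welfare, contradicting optimality. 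Concretely, I would use \Cref{lemma:lookOnlyOctopus} to reduce to the case where $\Gf\in\goct(i)$, so that $i$'s coalition is a head $H$ with $i$ and the rest are tentacles, and then apply \Cref{lemma:genOctopusOPTtentacles}: optimality forces $\modulus H\geq \frac{n+1}{2}\cdot\frac{g_l}{\modulus{T_l}}-1$ for each tentacle, which in turn bounds how many friends $i$ must keep relative to the enemies she is forced to retain. This is what pins the friend count at $\ceil{n/2}-e_i$ in the worst case and prevents $i$ from being isolated with only enemies.

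Second, \textbf{realizability} of the worst case: I must construct $\declared_{-i}$ making the selected optimum place $i$ with exactly $E_i$ and $\max\set{\ceil{n/2}-e_i,0}$ friends. The idea is to have $i$'s enemies form (together with enough of $i$'s friends) a large bidirectional clique that all point at $i$, so that by \Cref{lemma:octopusOPT} the optimum merges this head of size $\geq\ceil{n/2}$ with $i$; the head is chosen to contain all of $E_i$ plus just enough friends to reach the threshold $\ceil{n/2}$, while the remaining friends of $i$ are isolated so they leave. Thus $i$ is dragged into a coalition with all her enemies. I anticipate the delicate point being the interaction between the two NOM conditions and the size threshold: when $e_i\geq\ceil{n/2}$ the friend term vanishes and I must check the construction still yields an optimum (the clique of enemies alone, being large enough, is kept with $i$), and when $e_i$ is small I must verify that the selected optimum — using the fewest-coalitions tie-break — does not accidentally give $i$ strictly more friends or drop an enemy, which would make the stated coalition non-minimal. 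Once both bounds match the stated coalition's utility, Items~\ref{item:bestCase} and~\ref{item:worstCase} follow.
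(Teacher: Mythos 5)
Your overall architecture matches the paper's: for Item~\ref{item:bestCase}, the same ``make $\set{i}\cup F_i$ mutually friendly and isolate everyone else'' construction plus the trivial upper bound $u_i\leq f_i$; for Item~\ref{item:worstCase}, the same octopus-head realizability constructions (head $E_i$ when $e_i\geq\ceil{\frac{n}{2}}$, head $E_i\cup X$ of size exactly $\ceil{\frac{n}{2}}$ otherwise), and, for minimality, the same pair of lemmas (\Cref{lemma:lookOnlyOctopus} and \Cref{lemma:genOctopusOPTtentacles}).

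However, there is a genuine gap in the minimality half of Item~\ref{item:worstCase}, which is the technical heart of the lemma. You assert that \Cref{lemma:genOctopusOPTtentacles} ``pins the friend count at $\ceil{\frac{n}{2}}-e_i$'', but the lemma by itself does not: one must suppose some output places $i$ in a coalition $C\cup\set{i}$ of size $c+1$ with $f'\leq\ceil{\frac{n}{2}}-e_i-1$ friends, pass to the generalized-octopus instance via \Cref{lemma:lookOnlyOctopus}, combine the head-size bounds over all tentacles through an averaging step ($\max_l \modulus{T_l\cap F_i}/\modulus{T_l} \geq \modulus{T\cap F_i}/\modulus{T}$ with $T$ the union of the tentacles), and then show that the resulting quadratic inequality $2c^2+2(2-n)c+n^2-n+2-\ceil{\frac{n}{2}}(n+1)<0$ has no integer solutions in $c$. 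Crucially, this contradiction only goes through with a \emph{strict} version of \Cref{lemma:genOctopusOPTtentacles}, which must be re-derived using the fewest-coalitions tie-breaking of \ref{mech:OPT} (the paper's \Cref{lemma:genOctopusOPTtentaclesForMech}): with the non-strict bound you cite, the final inequality becomes non-strict, and for odd $n$ it then admits the integer solutions $c=\frac{n-3}{2}$ and $c=\frac{n-1}{2}$, so no contradiction follows. You do invoke the tie-break, but only on the realizability side (where it is not needed, since \Cref{lemma:octopusOPT} already gives uniqueness of the optimum), not here where it is indispensable. Two smaller issues: your Item~\ref{item:bestCase} construction, in which each friend reciprocates only $i$'s friendship, yields a bidirectional star rather than the clique you claim, so neither \Cref{lemma:almostIsolatedOPT} nor \Cref{lemma:octopusOPT} applies to it (it can be salvaged by a direct welfare computation, or by using the paper's clique construction); and your heuristic that ``splitting off her enemies would strictly raise social welfare'' is misleading, since $i$'s coalition-mates may be mutual friends -- the correct optimality constraint is that \emph{merging} a tentacle into a too-small head would raise welfare, which is exactly what \Cref{lemma:genOctopusOPTtentacles} encodes.
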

\begin{proof}
    We first show \ref{item:bestCase}. Given $t_i$, let us set $\declared_{-i}$ in such a way that $\set{i}\cup F_i$ is a bidirectional clique in the corresponding friendship graph $\Gf$ while all remaining agents are isolated nodes. For such instance, there is a unique optimum and it is attained by the partition where $\set{i}\cup F_i$ is the only non-singleton coalition. Clearly, no partition can guarantee $i$ strictly higher utility, therefore \ref{item:bestCase} follows.

    We now show \ref{item:worstCase} and distinguish between $e_i \geq \ceil{\frac{n}{2}}$ and $e_i < \ceil{\frac{n}{2}}$.

    If $e_i \geq \ceil{\frac{n}{2}}$, consider a declaration of the others $\declared_{-i}$ such that the friendship graph $\Gf(t_i,\declared_{-i})$ is an $i$-centered octopus graph with the head $E_i$. By~\Cref{lemma:octopusOPT}, there exists a unique optimum, which is therefore the output of \ref{mech:OPT}, and $E_i\cup\set{i}$ is one of its coalitions. Since there exists no coalition where $i$ gets strictly lower utility, we can conclude it is the worst possible outcome of \ref{mech:OPT} for $i$.

    Assume now $e_i < \ceil{\frac{n}{2}}$. We first show there exists $\declared_{-i}$ such that in $\ref{mech:OPT}(t_i,\declared_{-i})$, $i$ is in a coalition  with the agents $C\subset\agents\setminus \set{i} $ with $f'= \ceil{\frac{n}{2}} - e_i$ many friends. In fact, in this case, $\modulus{C} \geq \ceil{\frac{n}{2}}$ and we can build as before $\declared_{-i}$ so as the friendship graph $\Gf(t_i,\declared_{-i})$ is an $i$-centered octopus graph with the head $C$ and apply \Cref{lemma:octopusOPT}. To conclude this is the worst outcome for $i$ declaring $t_i$, we next show there is no outcome where $i$ has strictly less friends.

    Assume $i$ is put into a coalition  together with $C\subset \agents\setminus\set{i}$ where  $f'= |C\cap F_i| < \ceil{\frac{n}{2}} - e_i$. Note that having the same number of friends and a greater number of enemies is not possible.

    Assume such $C\cup\set{i}$, having size $c=\modulus{C}$, is a coalition in the social optimum, say $\pi^*= \set{C\cup\set{i}, T_1, \dots , T_m}$, for a game instance $(d_i, \declared_{-i})$. Let us then select $\declared'_{-i}$ so that $\Gf(d_i, \declared'_{-i})$ is an $i$-centered generalized octopus graph having head $C$ and tentacles $T_1, \dots , T_m$. From the proof of \Cref{lemma:lookOnlyOctopus} $\pi^*$ is optimum also for $(d_i, \declared'_{-i})$.  By \Cref{lemma:genOctopusOPTtentacles}, $\pi^*$ is optimum only if $c > \max_l\left\{\frac{|T_l \cap F_i|}{|T_l|}\right\}\cdot \frac{n+1}{2} - 1$. We notice that, \Cref{lemma:genOctopusOPTtentacles} the last inequality should be $\geq$; however, by adjusting the proof of the lemma the inequality turns out to be strict for the mechanism \ref{mech:OPT}, as it selects an optimal outcome with the lowest number of coalitions (see \Cref{lemma:genOctopusOPTtentaclesForMech} in the appendix). 

    Now,  
    $\max_l\left\{\frac{|T_l \cap F_i|}{|T_l|}\right\} \geq \frac{\sum_{l=1}^m |T_l \cap F_i|}{\sum_{l=1}^m |T_l|} = \frac{\left|\left(\bigcup_{l\in [m]}T_l\right) \cap F_i\right|}{\left|\bigcup_{l\in [m]}T_l\right|} = \frac{|T \cap F_i|}{|T|}$, where $T = \bigcup_{l\in [m]}T_l$. This implies, $c > \frac{|T \cap F_i|}{|T|}\cdot \frac{n+1}{2} - 1$. 
    Therefore,  $c > \frac{\ceil{\frac{n}{2}} - e_i - f'}{n- c - 1} \cdot \frac{n+1}{2} - 1$ as $T= \agents\setminus\set{C\cup \set{i}}$ and $\modulus{T \cap F_i}= \ceil{\frac{n}{2}} - e_i - f'$, and hence
    \[
    f' > \ceil{\frac{n}{2}} - e_i - (c+1)(n-c-1)\cdot\frac{2}{n+1} \, .
    \]

Now, recall that $f' \leq  \ceil{\frac{n}{2}} - e_i - 1$, thus,
\begin{gather*}
\left\lceil\frac{n}{2} \right\rceil - e_i - 1 > \ceil{\frac{n}{2}} - e_i - (c+1)(n-c-1)\cdot\frac{2}{n+1} \\
\left\lceil\frac{n}{2} \right\rceil - 1 > n- 1 - (c+1)(n-c-1)\cdot\frac{2}{n+1} \\
2c^2 + 2(2-n)c + n^2 - n+ 2 - \left\lceil\frac{n}{2} \right\rceil \cdot(n+1) < 0 \, .
\end{gather*}

The discriminant with respect to $c$ of the left side is equal to $4\left(-n^2 -2n+ 2\left\lceil\frac{n}{2}\right\rceil\cdot(n+1)\right)$. When $n$ is even, this expression equals $-4n$, therefore, the inequality above does not have a solution. If $n$ is odd, $\left\lceil\frac{n}{2}\right\rceil = \frac{n+1}{2}$, and the inequality has the following solution:
$
\frac{n-3}{2} < c < \frac{n-1}{2} \, .
$
However, this interval does not contain integer numbers.
In conclusion, such $C$ cannot exist when$f' < \ceil{\frac{n}{2}} - e_i$.
\end{proof}
We are now ready to show \ref{mech:OPT} is NOM.
\begin{proof}[Proof of \Cref{thm:optIsNOM}]
We prove both Condition 1 and 2 of the definition of NOM hold true for any agent $i$. 

    {\em Condition 1.}
    This condition is the simplest to show.
    By \Cref{lemma:bestWorstCaseOptMech}, if $i$ truthfully reports, the best outcome is attained by a partition where $i$ is in a coalition with all her friends and no enemies. This coalition provides her the highest possible utility, so, Condition 1 holds true as no misreport can guarantee a strictly higher utility. 

    {\em Condition 2.}
    To understand the worst-case scenario we will make a case distinction depending on the number of enemies $i$ has.

    If $e_i \geq \ceil{\frac{n}{2}}$, for any $d_i\in \declarations_i$, consider a declaration of the others $\declared_{-i}$ such that the friendship graph $\Gf$ for the resulting instance $(d_i, \declared_{-i})$ is an $i$-centered octopus graph with the head $H= E_i$. By \Cref{lemma:octopusOPT}, for this instance, \ref{mech:OPT} outputs a partition containing $E_i\cup\set{i}$. This is the worst possible coalition for $i$ according to her truthful declaration $t_i$. Since for any possible reporting of $i$ there always exists a declaration of the others such that $i$ ens up in a coalition together with all her enemies, Condition 2 is satisfied.

    Assume now $e_i < \ceil{\frac{n}{2}}$. Let us choose $X\subseteq F_i$ such that $\modulus{E_i\cup X}= \ceil{\frac{n}{2}}$.
    We first show that, regardless of the declaration of $i$, there always exists an instance where the unique optimum has $E_i \cup X\cup \set{i}$ as a coalition. Given $d_i\in\declarations_i$, let us choose $\declared_{-i}$ in such a way that $\Gf(d_i, \declared_{-i})$ is an $i$-centered octopus graph with the head $H= E_i \cup X$. Being $\modulus{H}\geq \ceil{\frac{n}{2}}$, by \Cref{lemma:octopusOPT} we have that, in the unique optimum for this instance, $H\cup\set{i}$ is a coalition and remaining agents form singleton coalitions. By \Cref{lemma:bestWorstCaseOptMech}, such an outcome is the worst possible when declaring $t_i$, therefore, misreporting cannot improve the worst-case. This concludes our proof.
\end{proof}
\section{Computing the optimum is NP-hard}
Despite the existence of an optimum and NOM mechanism, in this section we show that computing an optimum partition is NP-hard.

\begin{theorem}\label{thm:OPThard}
   For $\FA$ preferences, computing the optimum is NP-hard.
\end{theorem}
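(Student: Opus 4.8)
The plan is to reduce from a known NP-hard problem whose structure matches the combinatorial trade-off in $\SW(C)= f_C\cdot(1+\frac{1}{n})-\frac{c(c-1)}{n}$ (\Cref{prop-utility-coaltion}). Given the macro \verb|\threePartition| and \verb|\classNP| defined in the preamble, the natural source is \threePartition{} (or equivalently a balanced/bin-packing-style partition problem): one is given $3m$ positive integers $a_1,\dots,a_{3m}$ summing to $mB$ with each $a_j\in(B/4,B/2)$, and must decide whether they can be split into $m$ triples each summing exactly to $B$. The reason this is the right target is that the $\FA$ social welfare rewards friendship edges packed \emph{inside} coalitions while penalizing coalition size quadratically; this tension is exactly what forces an optimal partition to balance the ``mass'' of friends into groups of a prescribed size, mirroring the exact-cover-by-triples constraint of \threePartition.

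First I would build, for each integer $a_j$, a gadget in the friendship graph $\Gf$ — most cleanly an (almost) isolated clique (or a tentacle/head structure, leveraging \Cref{lemma:almostIsolatedOPT}, \Cref{lemma:octopusOPT}, and \Cref{lemma:genOctopusOPTtentacles}) whose size encodes $a_j$. The key design goal is that \Cref{lemma:almostIsolatedOPT} forces each gadget clique to stay intact in every optimum, so the only remaining choice is \emph{how the gadgets group together}. I would then add a global ``hub'' or a set of $m$ center agents whose friendship edges make it profitable to merge gadgets into coalitions, but with the quadratic size penalty calibrated (via the $\frac{c(c-1)}{n}$ term) so that the marginal gain of adding friendship edges is exactly offset at the target triple-sum $B$. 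The arithmetic identity in \Cref{prop-utility-coaltion} is what lets me tune the number of inter-gadget friendship edges against the size penalty so that a coalition achieves its locally maximal $\SW$ precisely when the encoded integers it contains sum to $B$.

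The two directions of correctness then run as follows. For the forward direction, a valid \threePartition{} solution yields a partition of the gadgets into $m$ balanced coalitions, and I would compute its $\SW$ in closed form using \Cref{prop-utility-coaltion}, calling this threshold value $\opt^\star$. For the reverse direction, I must show that \emph{any} partition achieving $\SW\ge\opt^\star$ must group the gadgets into exactly the balanced triples: here I would use a strict-convexity / exchange argument — because the size penalty $\frac{c(c-1)}{n}$ is convex in $c$, any imbalance among coalition sizes strictly decreases welfare, so optimality forces all target coalitions to the common size $B$, and the integrality of the $a_j$ together with the range restriction $a_j\in(B/4,B/2)$ forces triples. Combined with \Cref{lemma:almostIsolatedOPT} guaranteeing gadget cliques are never split, this pins down the optimum exactly.

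The main obstacle I anticipate is the reverse direction's rigidity: $\FA$ welfare is maximized by a \emph{global} balance of sizes, not obviously by the specific triple decomposition, so I must rule out ``off-target'' coalitions (merging four small gadgets, leaving a gadget as a singleton, or splitting the hub) whose size penalties and friendship gains could conceivably tie or beat the intended optimum. Controlling this will require choosing the polynomially-bounded encoding sizes (and the value $n$) carefully so that the convex size penalty dominates any alternative grouping, and verifying that the hub/center edges cannot be exploited to simulate a non-triple partition with equal welfare. Ensuring the gadget sizes and edge counts remain polynomial in $m$ and $B$ — so the reduction is genuinely polynomial-time — is the secondary technical point I would check last.
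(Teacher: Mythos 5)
Your choice of source problem (\threePartition) and your endgame (convexity of the $\frac{c(c-1)}{n}$ penalty forcing all coalition ``masses'' to a common value $T$, with condition (iii) then forcing triples) coincide with the paper's reduction, and your element gadgets (bidirectional cliques of size $x_h$) are exactly the paper's element-cliques. However, there is a genuine gap at the heart of your construction: the anchor structure. The paper's anchors are $m$ bidirectional cliques $K^1_X,\dots,K^m_X$, each of size $X=4m^2T$, and each element-clique is joined to \emph{every} anchor by a matching-style set of $x_h$ bidirectional edges. The large size $X$ is what makes the structural claims provable: merging two anchors into one coalition costs roughly $2X^2/n\approx 8mT$ in enemy penalty, strictly more than the at most $2mT$ of friendship value that could ever be gained, so every optimal coalition contains exactly one anchor. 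Neither of your proposed instantiations survives this test: a single global hub cannot encode an $m$-way grouping at all, and with $m$ single-agent centers merging two centers is strictly \emph{profitable} --- every gadget in the merged coalition gains $2x_h$ in friendship value from the second center, while the added enemy penalty is only $O(T^2/n)=O(T/m)$ --- so the optimum collapses into large coalitions and the correspondence with \threePartition\ is destroyed.

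Two further points. First, your appeal to \Cref{lemma:almostIsolatedOPT} to keep gadgets intact is invalid: for the grouping to be an $m$-way choice, each gadget must be connected to all $m$ anchors, so it is not an almost isolated clique (which by definition has a single hinge node); the paper needs dedicated lemmas in its appendix to show element-cliques are never split in an optimum. Second, your ``calibration'' --- tuning edges so that the marginal gain of adding a gadget is \emph{exactly offset} at sum $T$ --- is neither achievable nor needed under $\FA$ values: a friendship edge is worth $1$ while each enemy pair costs only $1/n$, so attaching a gadget to an anchor is always strictly profitable (gain $\approx 2x_h$ versus penalty $O(x_h X/n)=O(x_h/m)$), and no coalition has a ``locally maximal'' welfare at size $T$. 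Balance at $s_k=T$ emerges purely from the second-order term $\frac{1}{n}\sum_k s_k(s_k-1)$, which is the convexity argument you correctly identified for the reverse direction. In short, the missing idea is that the anchors must themselves be cliques of carefully chosen, polynomially large size (here $4m^2T$), so that anchor--anchor merges are forbidden while gadget--anchor attachment remains forced; without this, the ``off-target coalitions'' you flag as your main obstacle cannot be ruled out.
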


We prove \Cref{thm:OPThard} with a reduction from the \threePartition\ problem, which can be formulated as follows:
{
\begin{center}
\fbox{
\begin{varwidth}{0.95\linewidth}
\noindent {\bf \threePartition\ problem}

\smallskip
\noindent\textit{Input:} A ground set $\set{x_1, x_2, \dots , x_{3m}}$ of $3m$ elements such that 
\begin{enumerate}[label=(\roman*)]
    \item $\sum_{h=1}^{3m} x_h = mT$, for some $T>0$;
    \item $x_h\in \mathbb{N}$, for each $h\in [3m]$;
    \item $\frac{T}{4} < x_h < \frac{T}{2}$, for each $h\in [3m]$. \label{item:condition}
\end{enumerate}

\smallskip
\noindent\textit{Question:} Does there exist a partition of the ground set into $m$ disjoint subsets $S_1, \dots, S_m$ such that, for every $k\in [m]$,
\\
$S_k=\set{s_k^1, s_k^2, s_k^3}$ and  $s_k^1 + s_k^2 + s_k^3 = T$?
\end{varwidth}
}
\end{center}
}
%The strong NP-completeness of \threePartition\ ensures that even if we restrict to instances where $T$ is polynomial in $m$, the problem remains NP-complete.

Let us note that in the standard formulation of \threePartition, condition~(iii) is usually not required, however, the problem remains strongly NP-hard even under such a condition~\cite{garey1979computers}.
Moreover, condition~(iii) also implies that for any $S \subseteq \{x_1, x_2, \dots, x_{3m}\}$ if $\sum_{x\in S} x = T$, then $|S| = 3$. 
Consequently, any partition into subsets, each having sum $T$, is a partition into triples.

\paragraph{Reduction.}
Given a \threePartition instance, let us construct the friendship graph $\Gf$ representing the corresponding $\FA$ instance. 

\noindent {\bf Element-cliques:} Each of these cliques represents a specific element in the ground set of the \threePartition\ instance. Specifically, for every $h\in[3m]$, we create a bidirectional clique $K^h$ of size $x_h$. 

\noindent {\bf Set-cliques:} We create $m$ bidirectional cliques $K^1_{X}, \dots, K^m_{X}$ each one being of size $X = 4m^2T$.  The choice of $X$ is made in such a way that we can use the cliques $K^1_{X}, \dots, K^m_{X}$ to interpret a coalition in an optimum partition, for the $\FA$ instance, as a set in the partition of the ground set for the \threePartition\ instance. %We will provide further details.

\noindent {\bf Connections between cliques:} We add $x_h$ bidirectional edges between $K^h$ and each $K^k_X$ in such a way there is exactly one bidirectional edge between each vertex of $K^h$ and some node in $K^k_X$. Since $|K^h| = x_h < X$, this is always possible.

\smallskip
Notice that the number of agents is $n = \sum_{h=1}^{3m} x_h + mX = mT + 4m^3T$; thus, with \threePartition\ being strongly NP-hard the correctness of the reduction proves the NP-hardness of our problem.
\paragraph{The optimum in the reduced instance.}
As a first step to prove \Cref{thm:OPThard},
we need to understand the structure of a socially optimum outcome for the reduced instance. Thanks to a number of auxiliary lemmas, deferred to the appendix, we can conclude the following:

\begin{proposition}\label{thm:mainThmReduction}
In the reduced instance, any socially optimum partition $\pi^*$ is made by exactly $m$ coalitions and, for each $C\in\pi^*$,
\begin{enumerate}[label=(\alph*)]
    \item there exists a unique $k\in[m]$ such that $K^k_X\subseteq C$, and 
    \item for every $h\in [3m]$, either $K^h\subseteq C$ or $K^h\cap C=\emptyset$.
\end{enumerate}

\end{proposition}

To give an intuition, we chose $X$ sufficiently large so that putting two cliques of size $X$ in the same coalition would introduce too many negative relationships. Moreover, the positive connections between a clique (of both types, $K^k_X$ or $K^h$) are particularly sparse if compared to their size, this guarantees that these cliques will be put in the same coalition in any optimum outcome.

\Cref{thm:mainThmReduction} turns out to be very helpful in restricting the outcomes that may possibly be optimum: in an optimum outcome, there are exactly $m$ coalitions $C_1, \dots, C_m$ each one consisting of an $K^k_X$ and possibly some of the cliques $\{K^1, \dots, K^{3m}\}$. 
Let us denote by $\Sigma$ the set of such partitions and let us make some important observations about any $\pi\in \Sigma$ which will help us to establish the optimum social welfare. Let us count the number of friendship and enemy relationships within coalitions of $\pi$:
\begin{enumerate}
    \item The cliques never get split, and hence, inside the coalitions of $\pi$ there are always exactly $\alpha = mX(X-1) +\sum_{h = 1}^{3m} x_h(x_h-1)$ friendship relations between the members of the cliques;
    \item any element-clique is in a coalition with exactly one set-clique $K^k_X$, thus, the total number of the friendship relations between element- and set-cliques within the coalitions of $\pi$ is constant and equals to $\beta =\sum_{h = 1}^{3m}2x_h = 2mT$; 
    \item similarly, the total number of enemy relations between these groups is also always equaling $\gamma =\sum_{i=1}^{3m} 2x_h \cdot (X - x_h)$.
\end{enumerate}
It remains to determine the enemy relationships between element-cliques that are in the same coalition.
Assume w.l.o.g.\ that $\pi=(C_1, \dots, C_m )$ and $K^k_X\subseteq C_k$ and denote by $s_k = \sum_{h : \  K^h\subseteq C_k} x_h$. Hence, the enemy relationships between element-cliques are \\
$\sum_{k = 1}^m s_k(s_k - 1) - \sum_{h=1}^{3m} x_h(x_h - 1)$. Putting all together, $\forall \pi\in \Sigma$,
\begin{align*}
    \SW(\pi)= \alpha + \beta -\frac{ \gamma }{n} - \frac{1}{n}\left(\sum\limits_{k = 1}^m s_k(s_k - 1) - \sum\limits_{h=1}^{3m} x_h(x_h - 1) \right)\, .
\end{align*}

The only aspect affecting the $\SW$ of $\pi\in \Sigma$ is how $K^1, \dots, K^{3m}$ are located in its coalitions. In particular, the social welfare is maximum when $\sum_{k = 1}^m s_k(s_k - 1)$ is minimum.
Such a quantity is minimized for $s_1 = s_2 = \dots = s_m = T$.
In conclusion, the \threePartition\ instance is a ``yes'' instance if and only if in the social optimum of the reduced $\FA$ we have $s_1 = \dots = s_m = T$.
This proves \Cref{thm:OPThard}.

\section{An approximation mechanism}
For the sake of achieving NOM in polynomial time, in this section, we present a $(4+o(1))$-approximation mechanism. We recall that in~\cite{flammini2022strategyproof} it was shown that creating a coalition for each weakly connected component of $\Gf$ is SP and guarantees an $n$-approximation to the optimum. This is so far the best approximation achieved by an SP mechanism.  The bad performances of this mechanism can be attributed to the fact that when $\Gf$ is weakly connected but really sparse it would be convenient to split the unique weakly connected component of $\Gf$ into smaller coalitions, see for instance \Cref{ex::star}. To circumvent this problem, in~\cite{flammini2022strategyproof}, the authors presented a randomized mechanism, hereafter we will call it \randMech. It randomly splits the agents into two sets, each agent having probability $\frac{1}{2}$ of being in one of them, and then computes the weakly connected components on the two sides. \randMech\ is SP (in expectation) and guarantees an expected approximation $\leq 4$. 

Inspired by \randMech, we draw our deterministic and NOM mechanism. Specifically, it partitions the agents into two sets, $P_1$ and $P_2$, of size $\ceil{\frac{n}{2}}$ and $\floor{\frac{n}{2}}$, respectively. It then updates $P_1$ and $P_2$, through the subroutine \improveSW\ described by Algorithm~\ref{alg:approx} in the Appendix. \improveSW\ repeatedly tries to improve $\SW(\set{P_1, P_2})$ by swapping two agents, that is, simultaneously moving $i\in P_1$ to $P_2$ and $j\in P_2$ to $P_1$, or moving an agent from the largest to the smallest coalition (in case the two sets have the same size the algorithm will never perform move). \improveSW\  terminates when no swap or move can increase the $\SW$. The mechanism then computes the weakly connected components in $P_1$ and $P_2$ which will be the coalitions of the returned coalition structure. 

To show the mechanism is NOM, the initialization of $\set{P_1, P_2}$ will be crucial. Recall that $\delta(i)$ is the number of nodes weakly connected to $i$, while $N(X)$, for $X\subseteq \agents$, is the set of agents weakly connected to $X$. 
The mechanism will create the initial $\set{P_1, P_2}$ by greedily adding agents to the set $P_1$ in the following way: At first, it inserts an agent $i \in \agents$ of highest $\delta(i)$, then, iteratively proceeds by including an agent $j\in \neigh(P_1) \setminus P_1$ 
with highest $\delta(j)$ -- ties broken arbitrarily. This process proceeds until $P_1$ contains exactly $\lceil\frac{n}{2}\rceil$ agents. If at some point $\neigh(P_1) \setminus P_1= \emptyset$, the mechanism selects a new agent $i\in \agents\setminus P_1$ of highest $\delta(i)$, and proceeds as before. We call this partition a {\em greedy 2-partition} of $\agents$.
In summary:
\begin{mechanism}\label{mech:approx}
Given a set of agents $\agents$ and their declarations $\declared$, the mechanism creates a greedy 2-partition $\set{P_1, P_2}$. 
It then updates the partition using \improveSW:  $\set{P_1,P_2} \gets \improveSW(P_1,P_2)$.
Finally, it computes the weakly connected components in both $P_1$ and $P_2$, namely, $C_1, \dots C_m$, and returns $\pi = \set{C_1, \dots, C_m}$. 
\end{mechanism}

\begin{theorem}\label{thm:apxIsNOM}
 For $\FA$ instances, \Cref{mech:approx} is NOM and guarantees a $(4+o(1))$-approximation of the optimum in polynomial time.
\end{theorem}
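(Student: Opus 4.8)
The plan is to prove the three assertions in turn: non-obvious manipulability, the $(4+o(1))$ approximation, and polynomial running time. For NOM I would follow the template of \Cref{lemma:bestWorstCaseOptMech} and \Cref{thm:optIsNOM}: for a fixed true type $t_i$, characterize the best and the worst outcomes agent $i$ can face as $\declared_{-i}$ ranges over $\declarations_{-i}$, and then show that no misreport $d_i$ yields a strictly better best case (Condition~1) nor a strictly better worst case (Condition~2), both measured with $i$'s true utility. Two structural facts drive everything. First, every coalition the mechanism outputs is a weakly connected component living inside one side of a balanced $2$-partition, hence has size at most $\ceil{\frac n2}$; therefore $i$'s coalition contains at most $\min\{f_i,\ceil{\frac n2}-1\}$ of her true friends, whatever anybody declares. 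Second, refining $\set{P_1,P_2}$ into its weakly connected components never lowers welfare, since it deletes only enemy penalties while keeping all friendship edges, so $\SW(\pi)\ge \SW(\set{P_1,P_2})$; this inequality also anchors the approximation bound.

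For Condition~1 the size bound caps $i$'s true utility at $\min\{f_i,\ceil{\frac n2}-1\}$ under any declaration, because true utility never exceeds the number of true friends in her coalition. I would then produce a profile $\declared_{-i}$ for which truthful reporting attains exactly this value with no enemies: take $\min\{f_i,\ceil{\frac n2}-1\}$ mutual friends of $i$, make them a dense clique together with $i$, isolate everyone else, and verify that the greedy seeding (which starts $P_1$ from a highest-$\delta$ vertex and expands along $\neigh(P_1)$) together with \improveSW keeps this clique whole inside one side. For Condition~2 I would dually push $i$ into an enemy-crowded coalition: the adversarial $\declared_{-i}$ has $i$'s enemies declare $i$ as a friend, forcing weak connectivity to $i$, arranged so that greedy plus the local search leave $i$ together with $\max\{\ceil{\frac n2}-e_i,0\}$ friends and all of $E_i$, matching the worst-case value; I would then show that every misreport $d_i$ admits a $\declared_{-i}$ whose outcome is at least as bad for $i$.

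For the approximation I would first use $\SW(\pi)\ge \SW(\set{P_1,P_2})$ to reduce to lower-bounding the value of the final balanced $2$-partition, and bound the optimum from above by $\opt\le \modulus{F}\cdot\big(1+\frac1n\big)$, since by \Cref{prop-utility-coaltion} only friendship edges contribute positively. For the lower bound I would compare the \improveSW output to \randMech: a uniformly random balanced split retains each friendship edge inside a part with probability $\tfrac12-o(1)$ while the enemy penalties stay controlled, so the refined welfare of such a split is in expectation at least $\opt/(4+o(1))$, matching the known guarantee of \randMech. Since \improveSW halts at a partition from which no single swap or move increases $\SW(\set{P_1,P_2})$, I would argue that this deterministic value is no smaller than the random benchmark, yielding $\SW(\pi)\ge \opt/(4+o(1))$.

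Polynomiality is the easiest part: every welfare value is an integer multiple of $\frac1n$ and $\SW$ is at most $n^2$, so each strictly improving swap or move of \improveSW raises $\SW(\set{P_1,P_2})$ by at least $\frac1n$ and there are at most $O(n^3)$ of them, while the greedy initialization and the final component extraction are plainly polynomial. I expect the NOM analysis to be the main obstacle: unlike \Cref{mech:OPT}, the outcome here is produced by greedy seeding followed by an iterative local search, so the delicate work is to design the witness profiles $\declared_{-i}$ and to rigorously certify that greedy $+$ \improveSW $+$ component extraction really realize the claimed extreme coalitions, while respecting the $\ceil{\frac n2}$ size constraint and arbitrary tie-breaking. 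A secondary subtlety is justifying that the deterministic local optimum matches the random benchmark and carefully tracking the $o(1)$ terms.
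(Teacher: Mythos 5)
The main gap is in the approximation argument. Your anchor inequality $\SW(\pi)\ge\SW(\set{P_1,P_2})$ reduces the problem to a quantity that is too weak: by \Cref{prop-utility-coaltion}, $\SW(\set{P_1,P_2}) = f_\pi\left(1+\frac{1}{n}\right)-\frac{s}{n}$ with $s=\ceil{\frac{n}{2}}\left(\ceil{\frac{n}{2}}-1\right)+\floor{\frac{n}{2}}\left(\floor{\frac{n}{2}}-1\right)=\Theta(n^2)$, i.e.\ roughly $f_\pi-\frac{n}{2}$, which is negative (hence useless) on sparse instances such as $f=\Theta(n)$, where $\opt$ is still $\Theta(f)$. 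The component refinement is precisely what rescues the sparse case, and the paper bounds $\SW(\pi)$ \emph{after} refinement via a case analysis on $f_\pi$: e.g., when $f_\pi\le\ceil{\frac{n}{2}}-1$ every non-singleton coalition is at worst a tree, so the enemy penalty is at most $\frac{f_\pi(f_\pi+1)}{n}$ and $\SW(\pi)\ge f_\pi\left(1-\frac{f_\pi}{n}\right)$. Second, your step ``the deterministic local optimum is no smaller than the random benchmark'' is asserted, not proved, and it is not a generic fact about local search; it is exactly what must be established. The paper's crux is an edge-counting lemma (\Cref{lemma:LBfriendsApprox}): when \improveSW\ halts, no swap of a pair $(i,j)\in P_1\times P_2$ increases the number of internal friendship edges, and \emph{summing} these $\modulus{P_1}\cdot\modulus{P_2}$ inequalities yields $f_{\pi}\ge\frac{n-2}{2n-1}\,f$. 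You need this lemma (or an equivalent) explicitly; combined with the trivial bound $\opt\le f$ and the case analysis above, it gives the $4+o(1)$ ratio. Without it, your welfare-level comparison to \randMech\ has no support.

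On the NOM side your template matches the paper's, but the worst case you plan to certify for Condition~2 is imported from \Cref{lemma:bestWorstCaseOptMech} (i.e., from Mechanism~\ref{mech:OPT}) and is infeasible for \Cref{mech:approx}: a coalition containing $i$, all of $E_i$, and $\max\set{\ceil{\frac{n}{2}}-e_i,0}$ friends has $\ceil{\frac{n}{2}}+1$ members, while every coalition output by \Cref{mech:approx} lies inside one side of the $2$-partition and thus has at most $\ceil{\frac{n}{2}}$ members. The correct extremes are: if $e_i\ge\ceil{\frac{n}{2}}-1$, the worst case places $i$ with $\ceil{\frac{n}{2}}-1$ enemies (not all of $E_i$); otherwise it places $i$ with all of $E_i$ and $\floor{\frac{n}{2}}-e_i-1$ friends, and you must also prove the matching lower bound that truthful reporting always yields at least $\floor{\frac{n}{2}}-e_i-1$ friends (since $i$'s side contains at least $\floor{\frac{n}{2}}-1$ other agents, at most $e_i$ of them enemies, and $i$'s component keeps all her friends on that side). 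This is not cosmetic: the NOM proof consists of exhibiting, for every misreport, an adversarial profile realizing exactly this worst case, so the wrong target breaks the matching. Your polynomial-time argument is fine.
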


Let $\pi^{\ref{mech:approx}} = \set{C_1, \dots, C_m}$ be the outcome of \ref{mech:approx}. We denote by $f_\pi$ the number of friendships within the coalitions in a partition $\pi$. 

\begin{observation}\label{obs:incresingFriendsIncresasWelfare}
    If $\pi$ and $\pi'$ are the partitions before and after the execution of a swap or move step by \improveSW\ during \ref{mech:approx}, then, $$\SW(\pi')-\SW(\pi) = (f_{\pi'} -f_\pi)\left(1 + 1/n\right)\, .$$
\end{observation}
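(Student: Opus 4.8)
The plan is to reduce everything to the coalition-welfare formula of \Cref{prop-utility-coaltion} and then observe that the only quantity that can change across a swap or a move is the friendship count. Throughout \improveSW\ the working partition is always the two-block partition $\set{P_1, P_2}$, so for any such $\pi$ we have $\SW(\pi) = \SW(P_1) + \SW(P_2)$. Writing $c_1 = \modulus{P_1}$, $c_2 = \modulus{P_2}$, and noting that $f_\pi = f_{P_1} + f_{P_2}$ since the coalitions of $\pi$ are exactly $P_1$ and $P_2$, \Cref{prop-utility-coaltion} applied to each block gives
\[
\SW(\pi) = f_\pi\left(1 + \frac{1}{n}\right) - \frac{c_1(c_1 - 1) + c_2(c_2 - 1)}{n}\, .
\]

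The next step is to argue that every swap or move step preserves the \emph{multiset} of block sizes $\set{c_1, c_2}$, so the size-dependent term is identical for $\pi$ and $\pi'$. A swap simultaneously removes one agent from and inserts one agent into each block, hence leaves both $c_1$ and $c_2$ individually unchanged. A move is performed only when the two blocks differ in size, and it transfers a single agent from the larger block to the smaller one; since the sizes start out as $\ceil{\frac{n}{2}}$ and $\floor{\frac{n}{2}}$ and are preserved by swaps, such a move merely interchanges which block carries the ceiling size and which carries the floor size. In both cases the pair $(c_1, c_2)$ after the step coincides, up to swapping its two entries, with the pair before the step.

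To finish, I would use that $c_1(c_1 - 1) + c_2(c_2 - 1)$ is symmetric in $c_1$ and $c_2$, hence invariant under the interchange just described. Subtracting the two instances of the displayed formula therefore cancels the size term exactly, leaving
\[
\SW(\pi') - \SW(\pi) = (f_{\pi'} - f_\pi)\left(1 + \frac{1}{n}\right)\, ,
\]
which is precisely the asserted identity.

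The one place where care is genuinely needed is the move case: unlike a swap, it does alter the individual block sizes, and the point is to recognize that moving from the larger to the smaller block exchanges the two size values rather than creating new ones, so the symmetric size term is untouched. The remainder is a direct substitution into \Cref{prop-utility-coaltion} with no estimation involved, so I expect no further difficulty.
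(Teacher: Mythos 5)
Your proof is correct and follows essentially the same route as the paper: both express the social welfare of the two-block partition as $f_\pi\left(1+\frac{1}{n}\right)$ minus a size-dependent term (via \Cref{prop-utility-coaltion}) and observe that this term is unchanged by any swap or move. If anything, your explicit treatment of the move case---noting that a move exchanges the individual block sizes while leaving the symmetric quantity $c_1(c_1-1)+c_2(c_2-1)$ invariant---is more careful than the paper's brief remark that a swap or move ``does not change the sizes,'' which is literally false for moves but harmless for exactly the reason you identify.
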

\begin{proof}
    Let $s=\lceil\frac{n}{2}\rceil\cdot \left(\lceil\frac{n}{2}\rceil - 1\right) + \lfloor\frac{n}{2}\rfloor\cdot \left(\lfloor\frac{n}{2}\rfloor - 1\right)$ be the number of possible connections, both positive and negative, within the coalitions of $\pi$. 
    We notice that $s$ is also the number of possible connections within the coalitions of $\pi'$. In fact, a swap or move executed by the mechanism does not change the sizes and the number of coalitions.
    Therefore,
    $\SW(\pi)= f_\pi \left(1 +\frac{1}{n}\right) -\frac{s}{n}$, and the same holds for $\SW(\pi')$  replacing $f_\pi$ with $f_{\pi'}$. Hence, the thesis follows.
\end{proof}

In other words, a swap or a move in $\set{P_1, P_2}$ is convenient for the social welfare if and only if it strictly increases the number of positive relationships within coalitions. Such an observation implies that at most $f\leq n(n-1)$ swaps and moves will occur in any execution of \ref{mech:approx}, therefore, the mechanism is polynomial.

We next show the mechanism is NOM.

\begin{proof}[Proof \ref{mech:approx} is NOM]
    We show that for any agent $i$ there is no incentive to misreport her true preferences to improve either the best- or the worst-case scenario showing that both Condition 1 and 2 of the definition of NOM hold true.
 
    {\em Condition 1.} Recall, $F_i$ is the set of truthful friends of $i$ having size $f_i$. We start by noticing that, regardless of the declaration of $i$, in any outcome of the mechanism $i$ cannot be put in a coalition with more than $\min\set{f_i, \ceil{\frac{n}{2}} -1 }$, in fact, no coalition can have more than $\ceil{\frac{n}{2}}$ agents.
    We next show that if $i$ truthfully reports $t_i$, there exists $\declared_{-i}$ such that $i$ gets in $\ref{mech:approx}(t_i, \declared_{-i})$ a utility equal to  $\min\set{f_i, \ceil{\frac{n}{2}} -1 }$.
    
    If $f_i \leq \ceil{\frac{n}{2}} -1$, let $\declared_{-i}$ be such that $F_i$ is a bidirectional clique in $\Gf(t_i, \declared_{-i})$ and all agents in $F_i$ consider $i$ a friend. The remaining agents are isolated. In this case, it is easy to see, the mechanism will output $F_i\cup \set{i}$ in a coalition and the other agents in singletons. 
    
    If $f_i > \ceil{\frac{n}{2}} -1$, let $A \subseteq F_i$ such that $\modulus{A}= \ceil{\frac{n}{2}} -1$. We choose $\declared_{-i}$ so that $A$ is a bidirectional clique and $i$ is a friend for $j$ if and only if $j\in A$. No other friendship relationship exists. In this case, when initializing $P_1, P_2$, the mechanism will set $P_1= A \cup \set{i}$ and $P_2 = \agents\setminus P_1$. In fact, for any agent $j\not\in A \cup \set{i}$, we have $\delta(j)\leq 1$ while $ A \cup \set{i}$ is weakly connected and each agent $j' \in  A \cup \set{i}$ has $\delta(j')=\modulus{A}>1$. 
    The subroutine \improveSW\ will not change the partition $\set{P_1, P_2}$ as no agent in $A$ is weakly connected to $P_2$ and the number of positive between $i$ and $A$ is higher than the ones between $i$ and $P_2$, hence, no swap or move is profitable for the $\SW$.
    
    Putting all together, by truthfully reporting, it is possible for $i$ to end up in a coalition of value  $\min\set{f_i, \ceil{\frac{n}{2}} -1 }$. Since, regardless of the declaration of $i$, she cannot achieve a strictly higher utility in an outcome of \ref{mech:approx}, Condition 1 of NOM is satisfied.

    \noindent{\em Condition 2.} Recall $e_i$ is the size of truthful set of enemies $E_i$.  
    
    If $e_i \geq \ceil{\frac{n}{2}} -1$, given $A \subseteq E_i$ with $\modulus{A}= \ceil{\frac{n}{2}} -1$, for any possible declaration $d_i$ of $i$, we select $\declared_{-i}$ so that $A$ is a bidirectional clique and $i$ is a friend for $j$ if and only if $j\in A$. No other friendship relationship exists. Also in this case, the mechanism will set $P_1= A \cup \set{i}$ and $P_2 = \agents\setminus P_1$ and no swap and move will take place. Being $P_1$ weakly connected, $i$ will end up in a coalition with $ \ceil{\frac{n}{2}}-1 $ enemies. 
    Since the mechanism outputs coalitions of size at most $ \ceil{\frac{n}{2}}$, in no outcome of \ref{mech:approx} $i$ has a worse utility. Therefore, no misreport of $i$ can guarantee her a strictly better worst-case.

    If $e_i < \ceil{\frac{n}{2}} -1$, which implies, $e_i\leq \floor{\frac{n}{2}}-1$, in this case we start by showing that if $i$ truthfully reports $t_i$, then, $i$ cannot be in a coalition with less than $ \floor{\frac{n}{2}} - e_i - 1$ friends. In fact, let $\pi=\set{P_1,P_2}$ be the partition before \ref{mech:approx} computes the weakly connected components. Assume $i\in P_h$, for some $h\in [2]$. Then, the number of agents other than $i$ in $P_h$ is at least $ \floor{\frac{n}{2}} -1$. Thus, $\modulus{F_i \cap P_h}$ is minimized when $\modulus{P_h \setminus\set{F_i \cup \set{i}}}= e_i$, and hence $i$ has at least $ \floor{\frac{n}{2}} - e_i - 1$ friends in $P_h$. Since the coalition of $i$ will be the weakly connected component of $i$ in $P_h$, $i$ will be put in a coalition containing $F_i \cap P_h$. In conclusion, whatever the partition of $i$ is, by truthfully reporting, $i$ will always be in a coalition with at least $ \floor{\frac{n}{2}} - e_i - 1$ many friends.
    
    We next show that, regardless of the declaration of $i$, there exists $\declared_{-i}$ such that $i$ is put in a coalition with all her enemies and $ \floor{\frac{n}{2}} - e_i - 1$ many friends.
    Let $A= \agents\setminus  \set{E_i \cup X \cup\set{i}}$ where $X\subset F_i$ such that $\modulus{{E_i \cup X \cup\set{i}}}= \floor{\frac{n}{2}}$, which implies, $\modulus{A}= \ceil{\frac{n}{2}}$. 
    Let $\declared_{-i}$ be such that $A$ is a bidirectional clique, all agents in $E_i$ consider $i$ a friend. In this case, initially, the mechanism will set $P_1=A\setminus\set{j}\cup\set{i}$, for some $j\in A$. In fact, for this instance, $\delta(i)=n-1$, $\delta(j')=\modulus{A}>1$, for all $j'\in A$, and $\delta(j'')=1$, for all $j''\in \agents \setminus\set{A\cup\set{i}}$. Moreover, $A \cup\set{i}$ is weakly-connected, hence, computing the greedy 2-partition the mechanism selects at first $i$ and then $\ceil{\frac{n}{2}}-1$ agents in $A$. The mechanism will then improve the social welfare of $\set{P_1, P_2}$ with \improveSW. $\improveSW(P_1, P_2)$ will perform only the swap of $i$ and $j$ as this strictly increases the number of friendship relationships within coalitions -- no agent in $A$ considers $i$ a friend while all of them consider $j$ a friend, and $j$ is not connected to any agent in $A \cup\set{i}$ while $i$ possibly is. Once $P_1= A$, no other swap or move will occur -- a swap or move cannot strictly increase the number of positive relationships within coalitions.
    Therefore, the mechanisms will compute the weakly connected components in $P_1$ and $P_2$, being $P_2= E_i \cup X \cup\set{i}$ weakly connected $i$ will be put in the coalition $P_2$, and this is so, regardless of the declaration of $i$. This scenario is the worst possible for $i$ among the possible outcomes of \ref{mech:approx}, in fact, we have shown that $i$ cannot have strictly fewer friends and in this coalition the number of enemies is maximum. In conclusion, there is no way for $i$ to increase the worst outcome by misreporting her preferences.
    This shows that Condition 2 is satisfied and concludes our proof.
\end{proof}

To determine the approximation ratio of \ref{mech:approx}, we need to establish a lower bound for $f_{\pi^{\ref{mech:approx}}}$, the number of friendships within the coalitions of $\pi^{\ref{mech:approx}}$, w.r.t.\ the overall friendship relationships $f$.

\begin{lemma}\label{lemma:LBfriendsApprox}
For an $\FA$ instance with $f$ friendships, $f_{\pi^{\ref{mech:approx}}}\geq \frac{n-2}{2n-1}f$.
\end{lemma}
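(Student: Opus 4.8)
The plan is to exploit the local optimality of the two-set partition $\set{P_1,P_2}$ produced at the end of \improveSW, via a double-counting argument in the spirit of the standard local-search bound for cut problems. First I would record the right bookkeeping identity: since the output coalitions $C_1,\dots,C_m$ are exactly the weakly connected components of the final $P_1$ and $P_2$, any friendship relation whose two endpoints lie on the same side stays inside a single coalition. Hence $f_{\pi^{\ref{mech:approx}}}$ equals the number of friendship edges of $\Gf$ with both endpoints in $P_1$ or both in $P_2$; I will denote this by $f_\pi$ (for the two-set partition $\set{P_1,P_2}$), and write $f = f_\pi + c$, where $c$ is the number of friendship edges cut by $\set{P_1,P_2}$.

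The key step is to translate the termination condition of \improveSW. By \Cref{obs:incresingFriendsIncresasWelfare}, a swap strictly improves $\SW$ if and only if it strictly increases the number of within-side friendships; as \improveSW\ has stopped, no swap of some $i\in P_1$ with some $j\in P_2$ can do so. For an agent $x$, let $d^{in}_x$ (resp.\ $d^{out}_x$) count the friendship edges, in both directions, joining $x$ to its own side (resp.\ to the opposite side). Computing the effect of swapping $i$ and $j$ on the same-side count, and being careful that the edge(s) between $i$ and $j$ remain cut after the swap, gives the gain $(d^{out}_i - d^{in}_i) + (d^{out}_j - d^{in}_j) - 2e_{ij}$, where $e_{ij}$ is the number of friendship edges between $i$ and $j$; local optimality makes this $\le 0$ for every such pair.

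I would then sum this inequality over all pairs $(i,j)\in P_1\times P_2$. Using $\sum_{x\in P_1}d^{out}_x = \sum_{x\in P_2}d^{out}_x = c$, together with $\sum_{x\in P_1}d^{in}_x = 2f_1$ and $\sum_{x\in P_2}d^{in}_x = 2f_2$ (where $f_1,f_2$ are the within-$P_1$ and within-$P_2$ edges, so $f_1+f_2=f_\pi$), and $\sum_{(i,j)}e_{ij}=c$, the summed inequality collapses to $(n-2)c \le 2\bigl(|P_2|f_1 + |P_1|f_2\bigr)$. Since $|P_1|=\ceil{n/2}\ge|P_2|$, the right-hand side is at most $2|P_1|f_\pi \le (n+1)f_\pi$, whence $c \le \frac{n+1}{n-2}f_\pi$ and therefore $f = f_\pi + c \le \frac{2n-1}{n-2}f_\pi$, i.e.\ $f_{\pi^{\ref{mech:approx}}} \ge \frac{n-2}{2n-1}f$, as claimed.

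I expect the main delicacy to be the careful accounting in the swap-gain computation: correctly handling the edge(s) between the two swapped agents (the $-2e_{ij}$ correction) and the fact that $\Gf$ is directed, so that in the final summation each edge is counted consistently from exactly one of its endpoints. A secondary point worth verifying explicitly is that only the swap condition (and not the move condition) of \improveSW\ is needed for the bound, and that passing to weakly connected components at the last step indeed preserves every same-side friendship.
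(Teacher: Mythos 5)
Your proposal is correct and follows essentially the same argument as the paper: both translate the termination of \improveSW\ (via \Cref{obs:incresingFriendsIncresasWelfare}) into a per-pair swap inequality, sum it over all pairs in $P_1\times P_2$, and bound the resulting inequality using $|P_2|\leq|P_1|=\ceil{n/2}\leq\frac{n+1}{2}$ to obtain $(n-2)(f-f_{\pi})\leq(n+1)f_{\pi}$. Your $d^{in}/d^{out}/e_{ij}$ bookkeeping is just a notational variant of the paper's $\edg(\cdot,\cdot)$ accounting (which handles the $-2e_{ij}$ correction by writing $P_2\setminus\{j\}$ and $P_1\setminus\{i\}$), and your two flagged delicacies --- using only the swap condition and preserving same-side friendships under the final component split --- are handled identically in the paper.
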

\begin{proof}
Let $\edg(S_1, S_2)$ be the number of edges between $S_1$ and $S_2$ and $\edg(S)$ be the number of edges within $S$, for $S_1, S_2, S\subseteq \agents$.

Let $\set{P_1, P_2}$ be the output of \improveSW\ during the execution of \ref{mech:approx}. 
When splitting $\set{P_1, P_2}$ into weakly connected components the number of friendship relationships within coalitions remains the same. Therefore, $f_{\pi^{\ref{mech:approx}}}$ equals $\edg(P_1)+\edg(P_2)$.
Moreover, when \improveSW\ terminates, a swap of two agents of $P_1$ and $P_2$ does not increase the $\SW$. Thus, for every $i\in P_1$ and $j\in P_2$,
\begin{align*}
    \edg(\{i\}, P_2 \setminus \{j\}) - \edg(\{i\}, P_1 \setminus \{i\}) + \\ \edg(\{j\}, P_1 \setminus \{i\}) - \edg(\{j\}, P_2 \setminus \{j\}) \leq 0 \, ,
\end{align*}
as, from \Cref{obs:incresingFriendsIncresasWelfare}, a swap is performed as long as it strictly increases the number of friendships within $P_1$ and $P_2$.

If we sum these inequalities for all $i \in P_1$ and $j \in P_2$ we obtain:
\begin{align*}
     (|P_2| - 1) \cdot \edg(P_1, P_2) - 2 |P_2| \cdot \edg(P_1) + \\ (|P_1| - 1)\cdot \edg(P_2, P_1) - 2|P_1|\cdot \edg(P_2) \leq 0 \, .
\end{align*}

Since $\edg(P_1, P_2) = f- f_{\pi^{\ref{mech:approx}}}$, $\modulus{P_1}= \ceil{\frac{n}{2}}$, and  $\modulus{P_2}= \floor{\frac{n}{2}}$,
\begin{gather*}
    (\lceil \frac{n}{2} \rceil + \lfloor \frac{n}{2} \rfloor - 2) \cdot (f- f_{\pi^{\ref{mech:approx}}})- 2 \lfloor \frac{n}{2} \rfloor \cdot \edg(P_1) - 2 \lceil \frac{n}{2} \rceil \cdot \edg(P_2) \leq 0 
\end{gather*}
and, using $\floor{\frac{n}{2}} \leq \ceil{\frac{n}{2}} \leq \frac{n+1}{2}$ and $\floor{\frac{n}{2}} + \ceil{\frac{n}{2}} = n$,  we finally derive
\begin{align*}
 (n - 2) (f- f_{\pi^{\ref{mech:approx}}})\leq 2 \ceil{\frac{n}{2}} \left( \edg(P_1) + \edg(P_2) \right) \leq (n+1)\cdot f_{\pi^{\ref{mech:approx}}} \, .
\end{align*}

In conclusion, $ \frac{2n-1}{n-2}  f_{\pi^{\ref{mech:approx}}}\geq f$. 
\end{proof}

This lemma constitutes the bulk of the proof of the approximation guarantee. We defer the details to the supplemental material. In a nutshell, we use essentially the same analysis as was made for \randMech, which has an expected approximation ratio of at most $4$. \randMech, however, guarantees that in expectation exactly $f/2$ positive edges are within coalitions of the outcome, while, by the above lemma, we only have at least $f\cdot\left(\frac{1}{2} -o(1)\right)$ positive relations are within coalitions leading to an approximation factor of $4 +o(1)$. In the supplemental material, we show there exists an instance where the approximation factor of \ref{mech:approx} is $4-o(1)$.

\section{Enemies Aversion preferences}
Enemies Aversion ($\EA$) preferences are the counterpart of $\FA$ where agents give priority to coalitions with fewer enemies, and when the number of enemies is the same, they prefer coalitions with a higher number of friends. This can be encoded in the class of ASHG with values $v_i(j)= \frac{1}{n}$, if $j\in F_i$, and $v_i(j)= -1$, otherwise. For this class, in~\cite{flammini2022strategyproof}, it has been shown that the optimum is hard to approximate within a factor $O(n^{1-\epsilon})$, for any positive $\epsilon$. Moreover, a poly-time and $O(n)$-approximating SP mechanism exists but, even if the time complexity is not a concern, strategyproofness and optimality are not compatible. It is therefore natural to wonder if optimality is compatible with NOM. Unfortunately, this is not the case.

\begin{theorem}\label{thm:EAnotNOM}
    For $\EA$ preferences, no optimum mechanism is NOM
\end{theorem}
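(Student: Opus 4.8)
The plan is to show that every optimum mechanism violates Condition~2 of NOM, while Condition~1 is automatically satisfied, so the incompatibility is forced through the worst case. First I would record that under $\EA$ the true utility of $i$ is at most $f_i/n$, attained only when $i$ is with all her friends and no enemies, and that this bound is already reached truthfully: letting $\declared_{-i}$ make $\set{i}\cup F_i$ a mutual-friend clique with every other agent isolated, the unique optimum groups $\set{i}\cup F_i$ and keeps the rest in singletons (adding any isolated agent only inserts $-1$ relations). Hence $\max_{\pi\in\Pi_i(t_i,\mech)}u_i(\pi)=f_i/n$ equals the global upper bound on $i$'s utility, and Condition~1 holds for \emph{any} optimum mechanism.

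Next I would fix a \emph{safe lie}: let $d_i$ declare every other agent an enemy, i.e.\ $F_i^{d}=\emptyset$. For an arbitrary $\declared_{-i}$, inserting $i$ into a nonempty coalition $C$ changes the declared social welfare by $\sum_{j\in C}\left(d_i(j)+v_j(i)\right)\le \modulus{C}\left(\tfrac1n-1\right)<0$, since $d_i(j)=-1$ and $v_j(i)\le\tfrac1n$. Thus in \emph{every} optimum of $(d_i,\declared_{-i})$ agent $i$ is a singleton, so her true utility is $0$ for every $\declared_{-i}$ and every tie-breaking rule. Consequently $\min_{\pi\in\Pi_i(d_i,\mech)}u_i(\pi)=0$ for every optimum mechanism.

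It remains to exhibit a strictly negative truthful worst case. The key step is to build a $\declared_{-i}$ whose \emph{unique} optimum forces truthful $i$ to share a coalition with a true enemy. I would take $t_i$ with friends $F_i=\set{f_1,\dots,f_p}$ and a single enemy $k$, and let $\declared_{-i}$ turn $\set{f_1,\dots,f_p,k}$ into a mutual-friend clique in which every member also regards $i$ as a friend. Comparing the grand coalition $C=\set{i,f_1,\dots,f_p,k}$ against either isolating $i$ or isolating $k$ yields in both cases the same welfare gap $\tfrac{2p+1}{n}-1$; choosing the parameters so that $2p+1>n$ (e.g.\ $n=p+2$ with $p\ge 2$) makes $C$ the unique optimum. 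In it $i$ sits with $p$ friends and the enemy $k$, giving true utility $\tfrac{p}{n}-1<0$, so $\min_{\pi\in\Pi_i(t_i,\mech)}u_i(\pi)\le \tfrac{p}{n}-1<0$.

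Combining the two bounds gives $\min_{\pi\in\Pi_i(t_i,\mech)}u_i(\pi)<0=\min_{\pi\in\Pi_i(d_i,\mech)}u_i(\pi)$, so Condition~2 fails and $d_i$ is an obvious worst-case manipulation against \emph{every} optimum mechanism. The main obstacle is the last construction: one must check both that the clique genuinely makes keeping $i$ beside an enemy socially optimal — which is precisely why $i$ needs more than $\tfrac{n-1}{2}$ friends, so the $p/n$ gain from friends outweighs the $-1$ cost of the single enemy — and that this optimum is unique, so the conclusion is independent of the mechanism's tie-breaking. The all-enemies lie, by contrast, is robust and needs nothing beyond the one-line welfare inequality above.
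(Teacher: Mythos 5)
Your proposal is correct and follows essentially the same route as the paper: exhibit a declaration $\declared_{-i}$ of the others (all mutual friends, all liking $i$) under which the unique optimum forces truthful $i$ into a coalition with her one enemy at strictly negative utility, and then note that the all-enemies lie guarantees $i$ a singleton coalition (utility $0$) in every optimum, violating Condition~2. Your instantiation $n=p+2$ is exactly the paper's instance; the only differences are cosmetic extras (the unnecessary verification of Condition~1, and an explicit welfare inequality for the singleton claim that the paper merely asserts).
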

\begin{proof}
Let us consider an instance where agent $i$ has exactly one enemy in $E_i$, the truthful set of enemies. If all the other agents declare everyone else as friends, the grand coalition is the social optimum and $i$ gets a utility of $\frac{n-2}{n} - 1 <0$. We do not know if this is the worst case, but, this means, there exists at least one outcome, when $i$ truthfully reports, in which $i$'s utility is strictly negative.

Assume now agent $i$ declares everyone is her enemy. Then, in any social optimum, $i$ ends up in the singleton coalition obtaining a utility $0$. So, this manipulation improves the worst case and violates Condition 2 in the definition of NOM, which makes the mechanism outputting the social optimum obviously manipulable.
\end{proof}

It has also been proven that an SP and $(1+\sqrt{2})$-approximating mechanism exists~\cite{flammini2022strategyproof}. It would be interesting to further investigate what are the boundaries of approximating SP or NOM mechanisms when time complexity is out of discussion.
\section{Conclusions}
In this paper, we investigated NOM in HGs with $\FA$ preferences, aiming at designing mechanisms optimizing the social welfare while preventing manipulation. Despite proving that computing a welfare-maximizing partition is NP-hard, we showed that a NOM mechanism computing the optimum always exits. In turn, for $\EA$ preferences, such a mechanism cannot exist.
To address the computational challenges of optimal outcomes under $\FA$, we presented a $(4+o(1))$-approximation mechanism that is NOM and runs in polynomial time. This mechanism not only improves on the best-known strategyproof mechanism, which provides a linear approximation in the number of agents, but also represents the first deterministic constant-factor approximation algorithm for $\FA$ preferences; this is an interesting contrast to $\EA$ preferences for which it is hard to approximate the optimum within a factor of $O(n^{1-\epsilon})$.

Interesting future research directions include the study of NOM for more general classes of HGs, for example, in ASHGs no bounded approximation is possible while requiring SP, therefore, it would be natural to consider a weaker notion of manipulability. Conversely, future work may focus on desirable properties like stability, welfare maximization (even beyond the utilitarian welfare), efficiency, or fairness, determining which kind of manipulations they are sensitive to.

\bibliographystyle{plain}
\bibliography{bibliography}
\appendix
\section{Appendix}
%%%%%%%%%%%%%%%%%%%%%%%%%%%%%%%%%%%   PRELIMINARIES    %%%%%%%%%%%%%%%%%%%%%%%%%%%%%%%%%%%%%%%%%%%%%%%%%%%%%%%%
\subsection{Further Definitions}
In the viewpoint of graph representation, a collection of disjoint coalitions $C^1, \dots, C^m$ induce a cut in the graph $\Gf$, that is, $Cut(C^1, \dots, C^m)=\set{\set{i,j} \in F \,\vert\, i\in C^h, j\in C^\ell, h\neq \ell}$. The value of a cut is the value of friendship and enemy relationship contained in it, i.e., $\modulus{Cut(C^1, \dots, C^m)} - \frac{1}{n}\cdot \left(\sum_{h\neq \ell}\modulus{C^h}\cdot\modulus{C^l} - \modulus{Cut(C^1, \dots, C^m)}\right)$. We denote by $\SW(Cut(C^1, \dots, C^h))$ such value and observe that $\SW(C^1 \cup \dots \cup C^m)) - \sum_{h=1}^m \SW(C^h)$; in other words, the value of a cut $Cut(C^1, \dots, C^m)$ equals the loss or the gain we derive by splitting the agents in the coalitions $C^1, \dots, C^m$ rather than forming a unique coalition with all of them. 

We observe that if for a partition $\pi$, if there exist $C^1, \dots, C^h \in \pi$ such that $\SW(Cut(C^1, \dots, C^m))>0$, then, the outcome $\pi' $ obtained from $\pi$ by removing the coalitions ${C^1, \dots, C^m} $ and adding the coalition $ \cup_{h\in[m]}C^h$ is such that $\SW(\pi') > \SW(\pi)$. This motivates us to introduce the following property.

\begin{definition}[No Positive Cut]
    A partition $\pi$ satisfies the {\em no positive cut} (NPC) property if there exist no  $C^1, \dots, C^h \in \pi$ such that $\SW(Cut(C^1, \dots, C^m))>0$.
\end{definition}

Because of the previous discussion, we can conclude the following observation.

\begin{observation}\label{obs:OptNPC}
    Any optimal outcome satisfies the NPC property.
\end{observation}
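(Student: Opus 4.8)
The plan is to prove the contrapositive: I will show that any partition violating NPC cannot be optimal, which is logically equivalent to the stated claim. So suppose $\pi$ is a partition that does \emph{not} satisfy NPC. Then, by definition of the property, there exist coalitions $C^1, \dots, C^m \in \pi$ whose induced cut has strictly positive value, i.e. $\SW(Cut(C^1, \dots, C^m)) > 0$.

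The key tool is the identity, already recorded just above the statement, that the value of a cut equals precisely the change in social welfare obtained by merging the corresponding coalitions:
\[
\SW\!\left(\bigcup_{h\in[m]} C^h\right) - \sum_{h\in[m]} \SW(C^h) = \SW(Cut(C^1, \dots, C^m)).
\]
If I needed to re-establish this identity rather than invoke it, I would do so from \Cref{prop-utility-coaltion} by double-counting directed edges: merging $C^1, \dots, C^m$ into a single coalition increases the internal friendship count exactly by the friend-edges crossing the cut, while every remaining cross pair becomes an internal enemy pair. Substituting the new coalition size and friendship count into the formula of \Cref{prop-utility-coaltion} and subtracting $\sum_h \SW(C^h)$ makes all the within-coalition terms cancel, leaving precisely $|Cut(C^1,\dots,C^m)| - \frac{1}{n}\big(\text{cross pairs} - |Cut(C^1,\dots,C^m)|\big)$, which is the cut value.

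Given the identity, the argument closes in a single exchange step. Let $\pi'$ be the partition obtained from $\pi$ by deleting the coalitions $C^1, \dots, C^m$ and inserting the single coalition $\bigcup_{h\in[m]} C^h$; every other coalition of $\pi$ is left untouched, so its contribution to the social welfare is identical in $\pi$ and $\pi'$. Hence $\SW(\pi') - \SW(\pi) = \SW(\bigcup_{h} C^h) - \sum_{h} \SW(C^h) = \SW(Cut(C^1, \dots, C^m)) > 0$, so $\pi'$ is strictly better than $\pi$ and $\pi$ is not optimal. By contraposition, every optimal outcome satisfies NPC.

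I would expect the only delicate point to be the bookkeeping behind the cut-value identity, in particular keeping the directed/ordered nature of the friendship relations consistent across $|Cut(\cdots)|$, the cross-pair count $\sum_{h\neq\ell}|C^h|\cdot|C^\ell|$, and the friendship count $f_C$ appearing in \Cref{prop-utility-coaltion}. Once that identity is pinned down, the exchange argument itself is immediate and presents no further obstacle.
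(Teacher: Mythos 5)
Your proof is correct and follows essentially the same route as the paper: the paper derives the observation directly from the preceding discussion, which notes that merging coalitions inducing a positive cut strictly increases the social welfare, exactly the exchange argument you make explicit via contraposition. Your additional verification of the cut-value identity from \Cref{prop-utility-coaltion} is sound but merely spells out what the paper takes as given.
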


\subsection{Preliminary Results on Optimal Outcomes}
\almostIsolatedOPT*

\begin{proof}
Let us assume by contradiction that there exists an optimal solution $\pi^*$  and $k$ coalitions $C_1, \cdots, C_k\in \pi^*$ such that $C \subseteq \cup_{h=1}^kC_h$ and $C_h \cap C\neq \emptyset$, for each $h\in [k]$.

We start by observing that if $i^*\notin C_h$, for some $h\in [k]$, then $C_h \cap C =C_h$. Otherwise, there are at least two weakly connected components in $C_h$, as $i^*$ is the only node connected to $C$ in $\Gf$, and splitting the weakly connected components always increases the social welfare.

Next, we prove $k \leq 2$. If not, there are at least two coalitions $C_{h}, C_{h'}$ not containing $i^*$. We previously showed such coalitions are made by only members of $C$, being $C$ a (mutual) clique the cut $(C_{h}, C_{h'})$ is positive, contradicting the optimality of $\pi^*$.

In the remaining, we show by contradiction that $k=2$ is not possible.  
Assume $k=2$, $i^*$ must belong to either $C_1$ or $C_2$, otherwise, we can apply the arguments of the previous paragraph and show that it is strictly profitable for the social welfare to merge these two coalitions.

Without loss of generality assume $i^*\in C_1$ and therefore we have $C_2\cap C =C_2$.
Let us denote by $c_1,c_2$ and $c$ the size of $C_1,C_2$ and $C$, respectively. In the cut $Cut(C_1,C_2)$ there are at least $2c_2(c - c_2)$ friendship relationships; all the remaining relationships are enemies. Therefore, the value of the cut is of at least 
\begin{align*}
\SW(Cut(C_1,C_2)) &\geq  2c_2(c -c_2) - \frac{2\cdot c_1\cdot c_2 -2c_2(c - c_2)}{n}\\
                  &= 2c_2(c - c_2) \left(1 +\frac{1}{n}\right) - \frac{2\cdot c_1\cdot c_2 }{n} \\
                  &=\frac{2c_2}{n}\cdot \left((c - c_2) \left(n+1\right) -c_1\right)>0 \; ,
\end{align*}
where the last inequality holds because $c_1 < n$.
Being $\pi^*$ optimal, the social welfare of the cut $Cut(C_1,C_2)$ cannot be positive, leading to a contradiction.

In conclusion, $k=1$ and the thesis follows.
\end{proof}

\octopusOPT*
\begin{proof}
Let us start by noticing that set $H$ is an almost isolated clique with hinge $i$. Therefore, by \Cref{lemma:almostIsolatedOPT}, in the social optimum all agents from $H$ will end up in the same coalition. Moreover, for each agent $k \in \agents\setminus\left(\{i\} \cup H\right)$, $N(k)\subseteq \set{i}$ --  $k$ can be weakly connected only to $i$ -- so if $k$ is not in the same coalition of $i$, then $k$ forms a singleton coalition. This leaves us with three possible types of optimal partition:
\begin{enumerate}
    \item[$\pi^1$] where $H$ and $i$ in the same coalition, while all remaining agents are in singletons;
    \item[$\pi^2$] where agents from $H$ form a coalition, while $i$ is in a different coalition together with some $C_1 \subseteq \agents\setminus\left(\{i\} \cup H\right)$, and remaining agents are in singletons;
    \item[$\pi^3$] where $H$, $i$ and some $C_2 \subseteq\agents\setminus\left(\{i\} \cup H\right)$ form one coalition, while all other agents are in singletons.
\end{enumerate} 

Denoted by $h=\modulus{H}$, $c_1=\modulus{C_1}$, $c_2=\modulus{C_2}$, and $g=\modulus{F_{i} \cap H}$ we next estimate the social welfare of $\pi^1$, $\pi^2$, and $\pi^3$.

In $\pi^1$, the only non-singleton coalition is $H\cup \set{i}$ in which there are $h-g$ enemy relationships, the ones of $i$ towards $H \setminus F_i$. Hence,
$    \SW(\pi^1)= h\left(h-1\right) + h + g - \frac{h - g}{n}
    =h\left(h-1\right) + h\left(1-\frac{1}{n}\right) + g\left(1+\frac{1}{n}\right)$.

In $\pi^2$, the coalition $H$ consists of only friends while in $C_1 \cup \set{i}$ there are at most $\modulus{C_1\cap F_i}\leq c_1$ friendship relationship, therefore,
\begin{align*}
    \SW(\pi^2)&\leq h\left(h-1\right) + c_1 - \frac{c_1}{n} - \frac{c_1\left(c_1-1\right)}{n}\\
              &= h\left(h-1\right) +  c_1\left(1-\frac{1}{n}\right)- \frac{c_1\left(c_1-1\right)}{n} \, .
\end{align*}

Finally, in $\pi^3$ again we have only one non-singleton coalition, $H\cup C_2\cup\set{i}$,  since $i$ has at most $c_2$ friendship relationship towards $C_2$, in this case we get
\begin{align*}
    \SW(\pi^3)& \leq h\left(h-1\right) + h + g - \frac{h - g}{n} + c_2 - \frac{c_2 + c_2\left(c_2-1\right) + 2h\cdot c_2}{n} \\
    &= \SW(\pi^1) +c_2 \left(1 - \frac{c_2 +2h}{n}\right) \, .
\end{align*}

We first show that $\SW(\pi^2) < \SW(\pi^1)$. Assuming the contrary, from the previous estimations we can deduce:
\begin{gather*}
h\left(1-\frac{1}{n}\right) + g \left(1+\frac{1}{n}\right) \leq c_1\left(1-\frac{1}{n}\right)- \frac{c_1\left(c_1-1\right)}{n} \, .
\end{gather*}

This implies, $h\left(1-\frac{1}{n}\right) \leq c_1\left(1-\frac{1}{n}\right)$ and therefore $h\leq c_1$. However, we know the head of the octopus contains at least $\ceil{\frac{n}{2}}$ nodes while $C_1 \subseteq \agents\setminus (\set{i}\cup H)$ implying $c_1 \leq n - h -1 \leq  n- \ceil{\frac{n}{2}} -1 < h$ -- a contradiction. 

Let us now show $\SW(\pi^3) < \SW(\pi^1)$. 

By contradiction, if $\SW(\pi^3) \geq \SW(\pi^1)$, from the previous estimations we deduce: $c_2 \left(1 - \frac{c_2 +2h}{n}\right) \geq 0$.
If $\pi^1\neq \pi^3$, then, $c_2\geq 1$. Therefore, $c_2 +2h \leq n$ must hold true. Being, $h \geq \ceil{\frac{n}{2}}\geq \frac{n}{2}$, we can conclude $c_2 +2h \geq n +1$ leading again to a contradiction.

To sum up, if $h \geq \ceil{\frac{n}{2}}$, there exists a unique social optimum consisting of the coalition $\set{i}\cup H$ and all remaining agents partitioned in singleton coalitions.
\end{proof}

\begin{lemma}\label{lemma:genOctopusOPTtentaclesForMech}
Let $\Gf$ be a generalized $i$-centered octopus graph with head $H$ and tentacles $T_1, \dots, T_m$. If $\pi^*=\set{H\cup\set{i}, T_1, \dots, T_m}$ is an optimal outcome of \ref{mech:OPT}, then, for each $l\in[m]$, $|H| > \frac{\modulus{F_{i} \cap  T_l}}{|T_l|}\cdot \frac{n+1}{2} - 1$.
\end{lemma}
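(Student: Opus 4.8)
The plan is to replay the proof of \Cref{lemma:genOctopusOPTtentacles} verbatim and then invoke the tie-breaking rule of \ref{mech:OPT} to upgrade its weak inequality to a strict one. First I would fix an arbitrary $l\in[m]$, set $g_l=\modulus{F_i\cap T_l}$, and let $\pi$ be the partition obtained from $\pi^*$ by merging the two blocks $H\cup\set{i}$ and $T_l$ into a single coalition while leaving every other tentacle untouched. By the generalized octopus structure, the only friendship relations between $H\cup\set{i}$ and $T_l$ are the $g_l$ directed edges from $i$ into $F_i\cap T_l$ (no member of $T_l$ is a friend of anyone outside $T_l$, and no member of $H$ is a friend of any member of $T_l$), so exactly the same bookkeeping as in \Cref{lemma:genOctopusOPTtentacles} gives
\[
\SW(\pi)-\SW(\pi^*)=g_l-\frac{2|T_l|-g_l}{n}-\frac{2|T_l|\cdot|H|}{n}.
\]

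The crucial new step is to argue that this quantity is strictly negative rather than merely nonpositive. I would observe that $\pi$ is a legitimate outcome with exactly one coalition fewer than $\pi^*$, since it combines two distinct nonempty blocks of $\pi^*$ into one; in particular $\pi\neq\pi^*$. Because \ref{mech:OPT} outputs an optimum with the smallest number of coalitions and $\pi^*$ is its output, $\pi^*$ must attain the minimum coalition count among all optimal partitions. Hence $\pi$, having strictly fewer coalitions than $\pi^*$, cannot itself be optimal, which forces $\SW(\pi)<\SW(\pi^*)=\opt$.

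Combining the two facts yields
\[
g_l-\frac{2|T_l|-g_l}{n}-\frac{2|T_l|\cdot|H|}{n}<0,
\]
and rearranging for $|H|$ gives $|H|>\frac{g_l(n+1)}{2|T_l|}-1$. Since $l$ was arbitrary, the strict bound holds for every tentacle, which is exactly the claim.

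I expect the only genuine obstacle to lie in the middle paragraph: one must be careful that the merge really lowers the coalition count by one and that the tie-breaking rule indeed rules out the equality case $\SW(\pi)=\opt$. Both are immediate here because $H\cup\set{i}$ and $T_l$ are disjoint nonempty coalitions of $\pi^*$, so no further case analysis is required; everything else is the routine algebra already carried out for \Cref{lemma:genOctopusOPTtentacles}.
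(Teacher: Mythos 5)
Your proof is correct and follows the paper's own argument essentially verbatim: merge $H\cup\set{i}$ with $T_l$, compute the same social-welfare difference as in \Cref{lemma:genOctopusOPTtentacles}, and use the fewest-coalitions tie-breaking of \ref{mech:OPT} to rule out equality, since otherwise the merged partition would be an optimum with strictly fewer coalitions. Your middle paragraph just spells out the tie-breaking contradiction slightly more explicitly than the paper does; no gap.
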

\begin{proof}
Let us set $g_l=\modulus{F_{i} \cap  T_l}$.
Consider the partition $\pi$ obtained from $\pi^*$ by merging the coalitions $H\cup\set{i}$ and $T_l$. Since $\pi^*$ is optimal,
$
 \SW(\pi) - \SW(\pi^*) = g_l- \frac{2|T_l| - g_l}{n} - \frac{2|T_l|\cdot|H|}{n} <  0 \, .
$
The inequality is strict, because otherwise \ref{mech:OPT} choosing among the partitions with the same social welfare the ones with the smallest number of coalitions would merge $H\cup\set{i}$ and $T_l$.
Therefore, $\forall l \in [m]$,
$
|H| > \frac{g_l(n+1)}{2|T_l|} - 1$, 
and hence it holds for the maximum.
\end{proof}

%%%%%%%%%%%%%%%%%%%%%%%%%%%%%%%%%%%%%%%%%%%%%%%%%%%%%%%%%%%%%%%%%%%%%%%%%%%%%%%%%%%%%%%%%%%%%%%%%%%%%%%%%%
%%%%%%%%%%%%%%%%%%%%%%%%%%%%%%%%%%%   HARDNESS    %%%%%%%%%%%%%%%%%%%%%%%%%%%%%%%%%%%%%%%%%%%%%%%%%%%%%%%%
\subsection{Computing the optimum is NP-hard: Proof of \Cref{thm:mainThmReduction}}

\Cref{thm:mainThmReduction} is the result of the lemmas proven in this section.

\begin{lemma}\label{lemma:no2K_XsameCoalition}
If $\pi^*$ is a socially optimum partition, then, $K^j_X \cup K^{j'}_X\not \subseteq C$, for any $C\in  \pi^*$ and $j \neq j'$.
\end{lemma}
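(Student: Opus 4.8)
The plan is to argue by contradiction, exploiting the fact that a social optimum cannot be improved by splitting one of its coalitions, together with the fact that the set-cliques are enormous ($X = 4m^2T$) compared with everything else. Suppose $K^j_X \cup K^{j'}_X \subseteq C$ for some $C \in \pi^*$ with $j \neq j'$, and let $\pi'$ be obtained from $\pi^*$ by replacing $C$ with the two coalitions $C \setminus K^{j'}_X$ and $K^{j'}_X$. By the cut identity introduced in the appendix, $\SW(\pi^*) - \SW(\pi') = \SW(Cut(C \setminus K^{j'}_X,\, K^{j'}_X))$, so it suffices to show this cut value is strictly negative: that immediately gives $\SW(\pi') > \SW(\pi^*)$, contradicting optimality.

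To bound the cut I would count friendship and enemy edges between $K^{j'}_X$ and $C \setminus K^{j'}_X$ separately, being careful to count directed edges so as to match the $\SW$ formula of \Cref{prop-utility-coaltion}. On the friendship side: the only friendship edges incident to $K^{j'}_X$ that leave the clique run towards element-clique vertices, since distinct set-cliques are pairwise non-adjacent; by construction each element vertex has exactly one bidirectional edge to $K^{j'}_X$, i.e.\ at most $2$ directed friendship edges. As $C$ contains at most $\sum_{h=1}^{3m} x_h = mT$ element vertices, the cut contains at most $2mT$ friendship edges. On the enemy side: all $2X^2$ directed edges between $K^{j'}_X$ and $K^j_X$ (both fully inside $C$) are enemy edges, so the cut contains at least $2X^2$ enemy edges.

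Combining the two bounds, $\SW(Cut(C \setminus K^{j'}_X,\, K^{j'}_X)) \leq 2mT - \frac{2X^2}{n}$. It then remains to verify the arithmetic $\frac{2X^2}{n} > 2mT$, i.e.\ $X^2 > mTn$: substituting $X = 4m^2T$ and $n = mT + 4m^3T$ this becomes $16m^4T^2 > m^2T^2(1 + 4m^2)$, equivalently $12m^4 > m^2$, which holds for every $m \geq 1$. Hence the cut value is strictly negative and the splitting strictly improves the social welfare, contradicting the optimality of $\pi^*$.

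The main point requiring care is the edge bookkeeping: keeping the directed/bidirectional distinction consistent throughout, and verifying that the only friendships leaving a set-clique go to element vertices. This is exactly what forces the friendship side of the cut to stay $O(mT)$ while the enemy side is $\Theta(X^2)$ — the whole reason for choosing $X$ so large — so once the counts are pinned down the conclusion is just the elementary inequality above. I would not expect any genuine obstacle beyond getting these counts right.
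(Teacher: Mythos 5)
Your proof is correct and follows essentially the same route as the paper's: split the offending coalition into one set-clique and the rest, bound the friendship edges lost by $2mT$ and the enemy edges removed by $2X^2$, and check that $\frac{2X^2}{n} > 2mT$ for $X = 4m^2T$. The only cosmetic difference is that you phrase the comparison via the cut/NPC formalism rather than directly comparing social welfares, and your arithmetic (yielding $12m^2>1$) is in fact cleaner than the paper's, which contains a harmless slip in the final simplification.
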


\begin{proof}
Let us suppose the contrary. Assume there exists $C\in \pi^*$ and $j,j'$, with $j\neq j'$, such that $K^j_X \cup K^{j'}_X \subseteq C$.
Let us now consider the partition $\pi= \pi^* \setminus\set{C} \cup \set{C\setminus K^j_X\cup K^j_X}$. 

Since $\pi^*$ is socially optimal, we have $\Delta = \SW(\pi) - \SW(\pi^*) \leq 0$. 
In turn, $\Delta$ can be estimated using the following observations:
\begin{itemize}
    \item the total number of friendship relations between $K_X^j$ and the rest of the graph is exactly  $\sum_{i=1}^{3m} 2x_i = 2mT$, implying that, by separating $K_X^j$ from $C\setminus K^j_X$, it is not possible to lose more than $2mT$ positive relations;
    \item all members of $K^j_X$ and $K^{j'}_X$ belong to $C$, therefore, by splitting $C$ into $K_X^j$ and $C\setminus K^j_X$ we will remove at least $2|K^j_X|\cdot|K^{j'}_X| =2X^2$ enemy relationships.
\end{itemize}

Putting these two considerations together and using $X = 4m^2T$, we get: 
\begin{align*}
\Delta \geq \frac{2X^2}{n} - 2mT = \frac{32m^4T^2}{4m^3T+mT} - 2mT = \frac{2mT(8m^2 - 1)}{4m^2+1} \, ,
\end{align*}
which is strictly positive for any $m \geq 1$. This leads to contradiction to the optimality of $\pi^*$ and the thesis follows.
\end{proof}

\begin{lemma}\label{lemma:K_XoneCoalitio}
Let $\pi^*$ be a socially optimum partition. For each $j\in[m]$, there exists $C\in\pi^*$ such that $K^j_X \subseteq C$.
\end{lemma}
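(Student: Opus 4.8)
The plan is to proceed by contradiction: I assume that in some optimum $\pi^*$ the set-clique $K^j_X$ is not contained in a single coalition, and I produce a strictly better partition by pulling $K^j_X$ out into a coalition of its own. Note that \Cref{lemma:almostIsolatedOPT} cannot be invoked directly, since $K^j_X$ is weakly connected to many element-cliques rather than to a single hinge node. The intuition is instead a size argument: $K^j_X$ is a clique on $X=4m^2T$ vertices, so splitting it severs an enormous number of internal friendships, whereas its \emph{only} friendships to the rest of the graph are the $\sum_{h=1}^{3m}x_h=mT$ bidirectional edges it shares with the element-cliques; since $X$ dwarfs $mT$, keeping the clique together must be profitable.

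Formally, suppose $K^j_X$ meets the distinct coalitions of $\pi^*$ in the nonempty parts $A_1,\dots,A_r$ with $r\ge 2$, writing $|A_s|=a_s$ and $\sum_{s=1}^r a_s=X$. Let $\pi'$ be obtained from $\pi^*$ by removing every vertex of $K^j_X$ from its current coalition and inserting the single new coalition $K^j_X$. I would evaluate $\Delta=\SW(\pi')-\SW(\pi^*)$ by classifying each ordered pair of agents according to whether the move changes its ``same coalition'' status. Three kinds of pairs contribute: (i) pairs with both endpoints in $K^j_X$ that were previously separated, which are all mutual friends and number $\sum_{s\ne t}a_sa_t=X^2-\sum_{s}a_s^2$, each contributing $+1$; (ii) pairs with exactly one endpoint in $K^j_X$ that shared a coalition in $\pi^*$ and are now separated, whose \emph{friendships} total at most the external friendship degree of $K^j_X$, i.e.\ at most $2mT$ directed edges; and (iii) the enemy relations among the pairs in (ii), whose removal only raises welfare. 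All remaining pairs are unaffected. Hence $\Delta\ \ge\ \bigl(X^2-\sum_s a_s^2\bigr)-2mT$.

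It then remains to lower-bound $X^2-\sum_s a_s^2$ over all splits into $r\ge 2$ positive integer parts. Since $\sum_s a_s^2$ is maximized, subject to $\sum_s a_s=X$ and $r\ge 2$, by the most unbalanced split $(X-1,1)$, we get $\sum_s a_s^2\le X^2-2X+2$ and therefore $X^2-\sum_s a_s^2\ge 2(X-1)$. Substituting $X=4m^2T$ yields
\[
\Delta\ \ge\ 2(X-1)-2mT\ =\ 8m^2T-2mT-2\ =\ 2mT(4m-1)-2\ >\ 0
\]
for every $m\ge 1$, contradicting the optimality of $\pi^*$; hence $K^j_X$ lies entirely inside one coalition.

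I expect the only delicate points to be bookkeeping the pairwise contributions correctly and justifying the two bounds used: that the friendships severed in (ii) never exceed the $mT$ clique-to-clique edges fixed by the reduction (which lets me discard the favorable enemy removals from the lower bound), and that the quadratic $X^2-\sum_s a_s^2$ is minimized at the extreme split. Neither is a genuine obstacle, as the huge gap between $X$ and $mT$ makes the final inequality extremely loose.
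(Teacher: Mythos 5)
Your proof is correct and follows essentially the same route as the paper's: extract $K^j_X$ into its own coalition, bound the gained internal friendships below by $2(X-1)$ and the severed external friendships above by $2mT$, and conclude with the same inequality $2mT(4m-1)-2>0$. The only cosmetic difference is that you obtain the $2(X-1)$ bound via exact pair counting and convexity of $\sum_s a_s^2$, whereas the paper cites the fact that any non-trivial split of a clique of size $X$ cuts at least $X-1$ undirected edges.
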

\begin{proof}
By contradiction, assume $K^j_X \not\subseteq C$ for some $j$, then, there exists a partition of $K^j_X$, $A_1, \dots, A_k$, with $k\geq 2$, and $k$ distinct coalitions in $\pi^*$, $C_1, \dots, C_k$, such that $A_h \subseteq C_h$, for all $h\in[k]$.

Let us compare $\pi^*$ with the partition $\pi$ derived from $\pi^*$ by replacing every $C_h$ by $C_h\setminus A_h$ and adding a separate coalition $K^j_X$. Then, the following hold:
\begin{itemize}
    \item by grouping $A_1, \dots , A_k$ into one coalition we will gain at least $2(X-1)$ friendship relations\footnote{Given an undirected graph  and a clique of size $x$, for any non-trivial partition of the clique there are at least $x-1$ edges in the cut.};
    \item the total number of positive connections between $K^i_X = \bigcup_{h\in[k]} A_h$ and the remaining graph is $2mT$ implying that by separating $A_1, \dots , A_k$ from $C_1\setminus A_1, \dots, C_k\setminus A_k$, respectively, we will lose at most $2mT$ friendship relationship in total. 
\end{itemize}

Consequently, the difference in welfare is given by
\begin{align*}
    \SW(\pi) - \SW(\pi^*) \geq 2(X-1) - 2mT  &= 8m^2T - 2 - 2mT \\
                                             &= 2mT(4m - 1) - 2 \, ,
\end{align*}
which is strictly positive for any $m, T \geq 1$. A contradiction to the optimality of $\pi^*$, therefore, the thesis follows.
\end{proof}

\begin{lemma}\label{lemma:KiAllTogether}
Let $\pi^*$ be a socially optimum partition. For each $K^i$, there exists $C\in\pi^*$ such that $K^i \subseteq C$.
\end{lemma}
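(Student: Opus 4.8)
The plan is to argue by contradiction, exactly in the spirit of \Cref{lemma:K_XoneCoalitio}: if some element-clique $K^i$ is split across two or more coalitions of an optimum $\pi^*$, I will exhibit a modification that strictly increases the social welfare, contradicting optimality. The starting point is the structure already in hand. By \Cref{lemma:no2K_XsameCoalition} and \Cref{lemma:K_XoneCoalitio} every set-clique $K^k_X$ lies wholly inside a single coalition and no coalition contains two set-cliques, so each coalition of $\pi^*$ contains \emph{at most one} set-clique. I would also record the connectivity coming from the reduction: in $\Gf$ distinct element-cliques are never adjacent, and each vertex of $K^i$ has exactly one friend inside each set-clique; hence the only friendship relations leaving $K^i$ go to set-cliques, and there are $2x_i$ of them per set-clique. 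Suppose then $K^i=A_1\cup\dots\cup A_k$ with $k\geq 2$, where $A_h\subseteq C_h$ for distinct coalitions $C_h\in\pi^*$ (note $x_i\geq 2$ whenever a split is possible). I split into two cases according to whether some part of $K^i$ shares a coalition with a set-clique.

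If \emph{none} of the $C_h$ contains a set-clique, I reunite $K^i$ into a single standalone coalition. Since distinct element-cliques are non-adjacent and the $C_h$ carry no set-clique, no vertex of $K^i$ has any realised friend outside $K^i$ in $\pi^*$; thus reuniting loses no friendship relation, creates no new enemy relation (as $K^i$ is a clique), and regains every previously cut friendship relation of the clique $K^i$, that is, at least $2(x_i-1)$ of them. Hence $\SW$ strictly increases, a contradiction.

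Otherwise, fix a coalition $D_{k^*}$ that contains a set-clique and a non-empty part $A^*=K^i\cap D_{k^*}$, and move $B=K^i\setminus A^*$ into $D_{k^*}$. I track the change $\Delta$ in within-coalition relations. On the friendship side, the $2|B|\cdot|A^*|$ relations between $B$ and $A^*$, together with the $2|B|$ relations joining each vertex of $B$ to its friend in $K^{k^*}_X$, become internal, while the only friendship relations lost are those of $B$ to the set-cliques of its former coalitions, at most $2|B|$ in total; so the net friendship gain is at least $2|B|\cdot|A^*|$. On the enemy side, the only harmful term is the at most $2|B|\cdot|D_{k^*}|$ new enemy relations between $B$ and $D_{k^*}$, each costing $\tfrac1n$; and since $B\cap D_{k^*}=\emptyset$ we have $|D_{k^*}|\leq n-|B|$. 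Putting this together, $\Delta \geq 2|B|\cdot|A^*| - \tfrac{2|B|\cdot|D_{k^*}|}{n} \geq 2|B| - \tfrac{2|B|(n-|B|)}{n} = \tfrac{2|B|^2}{n} > 0$, again contradicting optimality. In both cases $K^i$ cannot be split, which is the thesis.

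The main obstacle, and the reason this lemma is genuinely harder than \Cref{lemma:K_XoneCoalitio}, is that the pull-out bound used there does not transfer: an element-clique is adjacent to \emph{all} $m$ set-cliques, so isolating it as a singleton coalition would forfeit up to $2x_im$ friendship relations and is not profitable in general. The fix is to relocate $K^i$ into a set-clique coalition rather than isolate it, and the delicate point is to show that the many new enemy relations created with the large set-clique $K^{k^*}_X$ are cheap. This is precisely where the calibration $X=4m^2T$ against $n=mT+4m^3T$ pays off: since $X\ll n$, each such enemy relation costs only $\tfrac1n$, and their total cost is dominated by the friendship gained from completing $K^i$, yielding the strict bound $\Delta\geq \tfrac{2|B|^2}{n}$.
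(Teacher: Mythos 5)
Your proof is correct, and it structurally mirrors the paper's own argument: the same reduction to $x_i\geq 2$, the same appeal to \Cref{lemma:no2K_XsameCoalition,lemma:K_XoneCoalitio} to ensure every coalition holds at most one set-clique, the same two-case split (no part of $K^i$ shares a coalition with a set-clique versus some part does), and in each case the same modification (reunite $K^i$ as a standalone coalition, respectively merge the remaining parts into the set-clique coalition). The interesting difference is the bookkeeping in the main case. The paper credits only the clique-reunification gain $2(x_i-1)$ and charges the new enemy relations via the structural estimate $|C'_{\ell}\setminus K^i| < X + mT$, so strict positivity of the welfare change rests on the calibration $n - X - mT = 4m^2T(m-1) > 0$, which in particular needs $m>1$. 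You instead credit the friendship gain $2|B|\cdot|A^*|$ between the moved part $B$ and the part $A^*$ already sitting with the set-clique, and charge the enemy cost $2|B|\cdot|D_{k^*}|/n$ using only the generic bound $|D_{k^*}|\leq n-|B|$; this yields $\Delta \geq 2|B|^2/n > 0$ with no reference to the values of $X$, $T$, or $m$ whatsoever, so your estimate is more elementary and slightly more general (it covers $m=1$, where the paper's bound degenerates to zero). One inaccuracy in your closing commentary: you attribute the success of your bound to the calibration of $X=4m^2T$ against $n=mT+4m^3T$, but your own computation never uses it --- the per-pair cost of $1/n$ is intrinsic to $\FA$ utilities, and your inequality needs only $|A^*|\geq 1$ and $|D_{k^*}| < n$. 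It is the paper's version of the estimate, not yours, that requires $X+mT<n$.
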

\begin{proof}
The statement holds true if $x_i=1$ as the clique consists of a unique member. Hereafter, we assume $x_i>1$.

By \Cref{lemma:no2K_XsameCoalition,lemma:K_XoneCoalitio}, if $C\in\pi^*$, then $C$ contains at most one $K^j_X$, for some $j$. Hence, if there exists $v\in K^i\cap C$, then, $v$ has at most one friendship relationship towards an agent in $C\setminus K^i$, i.e., the friendship relationship connecting $v$ to the $K_X^j$ that is possibly contained in $C$ (it is possible that $C$ does not contain any clique of size $X$). 
 
Assume that $K^i \not\subseteq C$, then, there exists a partition of $K^i$, $A_1, \dots, A_k$, with $k\geq 2$, and $k$ distinct coalitions in $\pi^*$, $C_1, \dots, C_k$, such that $A_h \subseteq C_h$, for all $h\in[k]$.

If one of these coalitions, say $C_{\ell}$, contains one of the cliques of size $X$, say $K^j_X$, let us create the new partition $\pi$ by removing $A_h$ from $C_h\setminus A_h$, for each $h \neq \ell$, and extending the coalition $C_{\ell}$ to the coalition $C'_{\ell} = C_{\ell} \cup (\bigcup_{h\neq \ell} A_h) = C_\ell \cup K^i$. In this case, the difference in social welfare between $\pi^*$ and $\pi$ can be estimated as follows:
First notice that by putting $A_1, \dots, A_k$ into one coalition we will gain at least $2(x_i-1)$ friendship relations since this number is minimized when $k=2$ and in this case equals to $2|A_1| \cdot |A_2| = 2|A_1| \cdot \left(x_i - |A_1|\right) \geq 2(x_i-1)$. Moreover, by transforming $\pi^*$ into $\pi$ we will add a number of enemy relationships that is at most $ 2|C'_{\ell}\setminus K^i|\cdot (x_i-1) $ as at least one member of $K^i$ was already in $C_{\ell}$. Since in $C'_{\ell}$ there is only one clique of size $X$ and possibly all $K^{i'}$ for $i'\in [3m]$ we can conclude
\begin{align*}
    |C'_{\ell}\setminus K^i| < X + \sum_{i'\neq i} |K^{i'}| = X + \sum_{i'\neq i} x_{i'} < X + mT
\end{align*}
 and hence the number of enemy relations added by transforming $\pi^*$ into $\pi$ is less than $2(x_i - 1)\cdot(X + mT)$.

Altogether, we obtain the following estimation of $\Delta = \SW(\pi) - \SW(\pi^*)$:
\begin{align*}
    \Delta &> 2 (x_i - 1) - \frac{2(x_i - 1) \cdot (X + mT)}{n} = \\
    &= 2(x_i - 1) \cdot \frac{n - X - mT}{n} = \\
    &= 2(x_i - 1) \cdot \frac{4m^2T(m - 1)}{n} \, ,
\end{align*}
which is strictly positive when $x_i > 1$ and $m > 1$ -- a contradiction to the optimality of  $\pi^*$.

If, instead, there is no $C_{\ell}$ containing a clique of size $X$, let us derive $\pi$ from $\pi^*$ by simply separating $A_h$ from $C_h\setminus A_h$, for every $h$, and creating a new coalition $C = \bigcup_{h\in [k]} A_h = K^i$. Then, since all positive connections between $K^i$ and the rest of the graph lead to the cliques of size $X$, we do not lose any friendship relations. Conversely, by uniting $A_1, \dots , A_k$ we gain at least $2(x_i - 1)$ positive edges. Thus, $\Delta \geq 2(x_i - 1)>0$, and $\pi^*$ again cannot be the social optimum.
\end{proof}

With the next lemma, we show that every coalition in the social optimum must contain a clique of size $X$.

\begin{lemma}
    If $\pi^*$ is a socially optimal partition, then for every $C\in \pi^*$ there exists $j\in[m]$ such that $K^j_X \subseteq C$.
\end{lemma}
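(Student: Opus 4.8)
The plan is to argue by contradiction: assume some coalition $C_0\in\pi^*$ contains no set-clique, and then exhibit a single relocation step that strictly raises the social welfare, contradicting optimality. First I would pin down the structure of $C_0$ using \Cref{lemma:KiAllTogether}: every agent lies in exactly one element- or set-clique, each element-clique is wholly contained in a single coalition, and $C_0$ holds no set-clique, so $C_0$ must be a disjoint union of complete element-cliques. In particular I can fix one element-clique $K^h\subseteq C_0$. By \Cref{lemma:no2K_XsameCoalition} and \Cref{lemma:K_XoneCoalitio} the $m$ set-cliques occupy $m$ distinct coalitions, so there is some coalition $C_1\neq C_0$ with $K^1_X\subseteq C_1$ into which I will move $K^h$.

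The structural fact I would exploit is that the only friendship edges incident to an element-clique lead to set-cliques, so there are \emph{no} friendship relations between two distinct element-cliques. Hence detaching $K^h$ from $C_0$ destroys no friendship relation at all (all of $K^h$'s neighbours inside $C_0$ are enemies) and merely deletes the $2x_h\cdot|C_0\setminus K^h|$ directed enemy relations between $K^h$ and the remainder of $C_0$. Dually, inserting $K^h$ into $C_1$ creates the $2x_h$ directed friendship relations coming from the $x_h$ bidirectional edges between $K^h$ and $K^1_X$, and introduces $2x_h(|C_1|-1)$ new directed enemy relations. Letting $\pi$ be the partition obtained from $\pi^*$ by moving $K^h$ out of $C_0$ and into $C_1$, I would compute
\[
\SW(\pi)-\SW(\pi^*)=\frac{2x_h\,|C_0\setminus K^h|}{n}+2x_h-\frac{2x_h(|C_1|-1)}{n}\,.
\]

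To conclude I would show this quantity is strictly positive. The first summand is non-negative, so it suffices to control the last two: since $C_1$ and $K^h\subseteq C_0$ are disjoint coalitions of $\pi^*$ we have $|C_1|\leq n-x_h$, whence $|C_1|-1<n$ and therefore $2x_h-\tfrac{2x_h(|C_1|-1)}{n}>0$. Thus $\SW(\pi)>\SW(\pi^*)$, contradicting optimality, and so every coalition of $\pi^*$ must contain a set-clique. I expect the only delicate point to be the relation-counting bookkeeping: the proof hinges on recognizing that extracting $K^h$ loses no friendship (so the gain of $2x_h$ set-clique friendships is never cancelled by a friendship loss) and that the enemy term added upon joining $C_1$ is bounded purely through $|C_1|<n$, independently of the exact value of $X$.
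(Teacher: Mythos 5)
Your proof is correct, and while it follows the same contradiction-by-improvement strategy and rests on the same structural lemmas as the paper (\Cref{lemma:no2K_XsameCoalition,lemma:K_XoneCoalitio,lemma:KiAllTogether}), the improving deviation you use is genuinely different. The paper writes the offending coalition as $C_1=\bigcup_{i\in S}K^i$ and merges \emph{all of it} with a coalition $C_2$ containing some $K^j_X$; it gains $2s$ friendship relations, where $s=\sum_{i\in S}x_i$, bounds the newly created enemy relations by $2(X-1)s+2s(mT-s)$, and verifies positivity of the net change using the concrete gadget parameters $X=4m^2T$ and $n=mT+4m^3T$. You instead relocate a \emph{single} element-clique $K^h$ into the set-clique coalition. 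Your bookkeeping is accurate (detaching $K^h$ loses no friendships since distinct element-cliques share no edges; the merge gains exactly the $2x_h$ directed friendships toward $K^1_X$ and adds $2x_h(|C_1|-1)$ directed enemy pairs), and it buys a cleaner positivity argument: the gain $2x_h$ beats the enemy penalty $2x_h(|C_1|-1)/n<2x_h$ simply because $|C_1|\le n$, with no reference to the magnitude of $X$ or to $s\le mT$. So your argument is more robust to the parameter choices of the reduction, whereas the paper's coarser merge disposes of the entire coalition in one step; both are equally valid here.
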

\begin{proof}
Let us suppose in the optimal partition $\pi^*$ there exists a coalition, say $C_1$, such that it does not contain a clique of size $X$ and, therefore, by \Cref{lemma:KiAllTogether}, $C_1 = \bigcup_{i \in S} K^i$, for some $S \subseteq \{1, \dots, 3m\}$. Then, let us take an arbitrary coalition $C_2$ containing a clique $K^j_X$, for some $j$, and consider the partition $\pi$ obtained from $\pi^*$ by merging $C_1$ and $C_2$ into one coalition $C' = C_1 \cup C_2$. The comparison between $\SW(\pi)$ and $\SW(\pi^*)$ can be obtained thanks to following observations:
\begin{itemize}
    \item between $C_1$ and $K^j_X$ there are exactly $\sum_{i \in S} 2x_i$ friendship relations, therefore, we will gain this number of positive edges in $\pi$;
    \item in $C_2$ there is exactly one clique of size $X$ and at most all $K^i$ for $i \notin S$, implying that, by merging $C_1$ and $C_2$, we will add to the partition at most  $2X\cdot \sum_{i \in S}  x_i - \sum_{i \in S} 2x_i + 2\sum_{i \in S} x_i \sum_{k \notin S} x_k = 2(X - 1)\cdot \sum_{i \in S} x_i +  2\sum_{i \in S} x_i \sum_{k \notin S} x_k$ negative edges.
\end{itemize}
Overall, with $s$ denoting  $\sum_{i \in S} x_i$:

\begin{align*}
\Delta =  \SW(\pi) - \SW(\pi^*) &\geq 2s - \frac{2(X - 1)\cdot s + 2s(mT - s)}{n}\\
& \geq \frac{2s}{n}\left( n - X + mT-s\right)\\
& = \frac{2s}{n}\left( 4m^3T + mT - 4m^2T+ mT-s\right) \\
& \geq \frac{2s}{n}\cdot mT > 0
\end{align*}
where the second last inequality holds true because $4m^3T \geq 4m^2T$ and $mT-s\geq 0$.

But for positive $s$, $m$, and $T$ this expression is always positive, which means that $\pi^*$ is not socially optimal -- a contradiction.
\end{proof}

%%%%%%%%%%%%%%%%%%%%%%%%%%%%%%%%%%%%%%%%%%%%%%%%%%%%%%%%%%%%%%%%%%%%%%%%%%%%%%%%%%%%%%%%%%%%%%%%%%%%%%%%%%

\subsection{Approximate Mechanism}

Here we formally report the pseudocode of \improveSW, see Algorithm~\ref{alg:approx}.

\begin{algorithm}[h!]\DontPrintSemicolon
\caption{$\improveSW(X, Y)$}\label{alg:approx}
\Input{A partition $\set{X,Y}$ of $\agents$ s.t. $\modulus{X}= \ceil{\frac{n}{2}}$, $\modulus{Y}= \floor{\frac{n}{2}}$}
\tcc{Swap pairs of agents to increase $\SW$}
\Repeat{
 can no longer {\bf Swap} any pair $i\in X$, $j\in Y$
}{
\If{$\exists i\in X, j\in Y $ s.t. $\SW(\set{ X \setminus\set{i}\cup\set{j}, Y\setminus\set{j}\cup\set{i}}) >\SW(\set{X,Y})$}{
$X \gets X \setminus\set{i}\cup\set{j}$, $Y\gets Y\setminus\set{j}\cup\set{i}$ \tcp*{{\bf Swap}}
}
}
\tcc{Move an agent from largest to smallest coalition to increase $\SW$}
\lIf{$\modulus{X}=\modulus{Y}$}{\Return{$\set{X,Y}$} \tcp*{\improveSW\ }}
\Repeat{
Can no longer {\bf Move} an agent from smallest to largest coalition
}{
\If{$\exists i\in X \vert \SW(\set{X\setminus\set{i},Y\cup\set{i}}) >\SW(\set{X,Y})$}{
$Y\gets X\setminus \set{i}$, $X \gets Y \cup \set{i}$ \tcp*{{\bf Move}}
}
}

\Return{$\set{X,Y}$}
\end{algorithm}

\begin{theorem}
    Mechanism~\ref{mech:approx} guarantees an $4 +o(1)$-approximation of $\opt$.
\end{theorem}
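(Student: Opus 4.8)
The plan is to sandwich the ratio $\opt/\SW(\pi^{\ref{mech:approx}})$ between a clean upper bound on $\opt$ and a lower bound on the mechanism's welfare, both expressed through the total number of friendship relations $f$, and then to invoke \Cref{lemma:LBfriendsApprox} to close the gap. Concretely, I would first establish $\opt \le f\left(1+\tfrac1n\right)$ and then a matching per-coalition lower bound $\SW(\pi^{\ref{mech:approx}}) \ge \tfrac12\left(1+\tfrac1n\right) f_{\pi^{\ref{mech:approx}}}$; combined with $f_{\pi^{\ref{mech:approx}}} \ge \tfrac{n-2}{2n-1}\,f$ from \Cref{lemma:LBfriendsApprox}, these three ingredients collapse to a ratio of $\tfrac{4n-2}{n-2}=4+o(1)$.

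For the upper bound on $\opt$, I would apply \Cref{prop-utility-coaltion} coalition by coalition: for any partition $\pi$ one has $\SW(\pi)=\sum_{C\in\pi}\bigl(f_C(1+\tfrac1n)-\tfrac{|C|(|C|-1)}{n}\bigr)$. Dropping the non-positive enemy terms $-\tfrac{|C|(|C|-1)}{n}$ and using that $\sum_{C\in\pi}f_C\le f$ — since the coalitions are disjoint, each friendship edge lies inside at most one coalition — gives $\SW(\pi)\le f(1+\tfrac1n)$ for every $\pi$, and in particular for $\OPT$.

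The heart of the argument is the per-coalition lower bound, and this is where the structure of \ref{mech:approx} enters. Every coalition $C$ returned by the mechanism is a weakly connected component of $P_1$ or $P_2$; hence it satisfies two crucial properties: (i) $|C|\le\ceil{\tfrac n2}$, and (ii) being weakly connected it contains at least $|C|-1$ friendship relations, i.e.\ $f_C\ge|C|-1$. I would then bound the enemy cost by the friendship count via $\tfrac{|C|(|C|-1)}{n}\le\tfrac{|C|}{n}f_C\le\tfrac{n+1}{2n}f_C$, where the last step uses $|C|\le\ceil{\tfrac n2}\le\tfrac{n+1}{2}$. Substituting back into \Cref{prop-utility-coaltion} yields $\SW(C)\ge f_C(1+\tfrac1n)-\tfrac{n+1}{2n}f_C=\tfrac12\left(1+\tfrac1n\right)f_C$; summing over all returned coalitions and recalling $f_{\pi^{\ref{mech:approx}}}=\sum_C f_C$ gives the claimed bound.

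Combining the three ingredients produces $\frac{\opt}{\SW(\pi^{\ref{mech:approx}})}\le\frac{f(1+1/n)}{\tfrac12(1+1/n)\cdot\tfrac{n-2}{2n-1}\,f}=\frac{2(2n-1)}{n-2}=\frac{4n-2}{n-2}=4+\frac{6}{n-2}$, which is $4+o(1)$. I expect the main obstacle to be the per-coalition lower bound: its validity rests entirely on the size cap $|C|\le\ceil{\tfrac n2}$ guaranteed by the two-sided construction of \ref{mech:approx} (without it the enemy term is not dominated by $f_C$), and it is exactly the rounding in $\ceil{\tfrac n2}$ together with the $\tfrac{n-2}{2n-1}$ factor of \Cref{lemma:LBfriendsApprox} — rather than the idealized $\tfrac12$ of \randMech — that prevents the ratio from being exactly $4$ and forces the additive $o(1)$ slack. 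The degenerate cases (singletons, where $f_C=0$ and $\SW(C)=0$, and the case $C=P_h$) should be checked but cause no difficulty.
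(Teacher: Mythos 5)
Your proof is correct, and it takes a genuinely different route from the paper's on the key step. Both proofs share the same skeleton — an upper bound on $\opt$ in terms of $f$, a lower bound on the mechanism's welfare in terms of $f_{\pi^{\ref{mech:approx}}}$, and \Cref{lemma:LBfriendsApprox} to relate the two — but they diverge on how the enemy term is controlled. The paper bounds $\sum_h \frac{c_h(c_h-1)}{n}$ \emph{globally}, via a three-case analysis on the magnitude of $f_{\pi^{\ref{mech:approx}}}$ (components forming a tree inside $P_1$, then spilling over into $P_2$, then both sides connected), and in the middle case it even needs a calculus argument to locate the worst value of $f_\pi$. You instead argue \emph{per coalition}: each output coalition $C$ is a weakly connected component, so $f_C \geq |C|-1$, and the two-block construction caps $|C| \leq \ceil{\frac{n}{2}} \leq \frac{n+1}{2}$; together these give $\frac{|C|(|C|-1)}{n} \leq \frac{n+1}{2n}f_C$ and hence the uniform bound $\SW(C) \geq \frac{1}{2}\left(1+\frac{1}{n}\right)f_C$, which after summing and applying \Cref{lemma:LBfriendsApprox} yields the closed-form ratio $\frac{2(2n-1)}{n-2} = 4 + \frac{6}{n-2}$ with no case distinction at all. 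Your localization is cleaner and makes the source of the constant transparent (a factor $2$ from enemies inside half-size coalitions, a factor $\approx 2$ from edges lost across the cut); what the paper's finer analysis buys in exchange is regime-specific information — in some cases it certifies a ratio of $4-o(1)$, strictly below $4$ — which your coarser bound cannot see, but which is irrelevant to the stated $4+o(1)$ guarantee. Your handling of the degenerate cases is also right: singletons satisfy the per-coalition bound with equality ($0 \geq 0$), and if $f_{\pi^{\ref{mech:approx}}}=0$ then \Cref{lemma:LBfriendsApprox} forces $f=0$, so $\opt=0$ as well; as with the paper's bound, yours is vacuous only for $n=2$, which the asymptotic statement tolerates.
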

\begin{proof}
    Let us start by noticing that a trivial upper-bound of $\opt$ is $f$, that is, the number of friendship relationships for the given $\FA$ instance.
    
    Let $\pi = \set{C_1, \dots C_m}$ be the outcome of the mechanism, and let us denote by $f_\pi$ the number of positive relationships within the coalitions of $\pi$. Furthermore, we denote by $c_h$ the size of $C_h$. By \Cref{prop-utility-coaltion} we have $\SW(\pi)= f_\pi\left(1 + \frac{1}{n}\right) - \sum_{h=1}^m \frac{c_h(c_h-1)}{n}$.
    
    Notice that $\sum_{h=1}^m c_h = \modulus{P_1} +\modulus{P_2}$.
    The quantity $\sum_{h=1}^m \frac{c_h(c_h-1)}{n}$ is maximized when $P_1$ and $P_2$ are both weakly connected, if possible.  Therefore, we next give lower bounds to $\SW(\pi)$ depending on the value of $f_{\pi}$.
    
    Case 1: if $f_\pi \leq \ceil{\frac{n}{2}} -1$, the worst case scenario occurs when there is a unique weakly connected component in $P_1$, which is also a tree, in that case, 
    $$\SW(\pi)\geq f_\pi\left(1 + \frac{1}{n}\right) - \frac{f_\pi(f_\pi+1)}{n}= f_\pi \left( 1- \frac{f_\pi}{n}\right) \, ;$$
    
    Case 2: if $\ceil{\frac{n}{2}}\leq f_\pi \leq n-2$, then, 
    \begin{align*}
        f_\pi\left(1 + \frac{1}{n}\right) -  \frac{\ceil{\frac{n}{2}}\left(\ceil{\frac{n}{2}}-1\right)}{n} - \frac{(f_\pi - \ceil{\frac{n}{2}} +2 )(f_\pi - \ceil{\frac{n}{2}} +1)}{n} \, .
    \end{align*}
    
    Case 3: if $f_\pi > n-2$ we have
    $\SW(\pi)\geq f_\pi\left(1 + \frac{1}{n}\right) - \frac{\floor{\frac{n}{2}}\left(\floor{\frac{n}{2}}-1\right)}{n} - \frac{\ceil{\frac{n}{2}}\left(\ceil{\frac{n}{2}}-1\right)}{n} = f_\pi\left(1 + \frac{1}{n}\right) - \frac{\floor{\frac{n}{2}}^2 +\ceil{\frac{n}{2}}^2}{n} +1 .$

Let us denote by $R$ the approximation ratio of the mechanism. Employing the bound on $f_\pi$ obtained in \Cref{lemma:LBfriendsApprox}.

In case 1, where $f_\pi \leq \ceil{\frac{n}{2}} -1 \leq \frac{n-1}{2}$, we obtain %\maria{using $n - f_\pi \geq n - \ceil{\frac{n}{2}} + 1 = \floor{\frac{n}{2}} + 1 \geq \frac{n-1}{2} + 1 = \frac{n+1}{2}$ we might get a bit better estimate}
\begin{align*}
    R&\leq \frac{ \frac{2n-1}{n-2} f_\pi}{ f_\pi \left( 1- \frac{f_\pi}{n}\right)} 
    \leq \frac{ n(2n-1)}{ (n-2)(n- f_\pi)}  \leq \frac{ 2n(2n-1)}{ (n-2)(n-1)} = 4 +o(1) \, . 
\end{align*}

In case 2, 
\begin{gather*}
     R \leq \frac{ \frac{2n-1}{n-2} f_\pi}{f_\pi\left(1 + \frac{1}{n}\right) -  \frac{\ceil{\frac{n}{2}}\left(\ceil{\frac{n}{2}}-1\right)}{n} - \frac{(f_\pi - \ceil{\frac{n}{2}} +2 )(f_\pi - \ceil{\frac{n}{2}} +1)}{n}}  = \\ \frac{n\cdot(2n-1)}{(n-2)\left(n+1 - \frac{\ceil{\frac{n}{2}}\left(\ceil{\frac{n}{2}}-1\right)}{f_\pi} - \frac{(f_\pi - \ceil{\frac{n}{2}} +2 )(f_\pi - \ceil{\frac{n}{2}} +1)}{f_\pi}\right)} = \\ \frac{n\cdot(2n-1)}{(n-2)\left(n+1  - \frac{2\left(\ceil{\frac{n}{2}}\right)^2 - 4\ceil{\frac{n}{2}} + f^2_\pi + 2 + 3f_\pi -2f_\pi\ceil{\frac{n}{2}}}{f_\pi}\right)} = \\ \frac{n\cdot(2n-1)}{(n-2)\left(n+1  - f_\pi - 3 + 2\ceil{\frac{n}{2}} - \frac{2\left(\ceil{\frac{n}{2}} -1\right)^2}{f_\pi}\right)}
\end{gather*}

To estimate this expression from above, we should minimize $-f_\pi - \frac{2\left(\ceil{\frac{n}{2}} -1\right)^2}{f_\pi}$. Its derivative with respect to $f_\pi$ equals to:
\begin{align*}
    -1 + \frac{2\left(\ceil{\frac{n}{2}} -1\right)^2}{f^2_\pi}
\end{align*}
and therefore equals to zero when $f_\pi = \pm \sqrt{2}\left(\ceil{\frac{n}{2}} -1\right)$. So, this expression increases between $\ceil{\frac{n}{2}}$ and $\sqrt{2}\left(\ceil{\frac{n}{2}} -1\right)$ and decreases between $\sqrt{2}\left(\ceil{\frac{n}{2}} -1\right)$ and $n-2$, and reaches its minimum either in $\ceil{\frac{n}{2}}$ or in $n-2$.

1. $f_\pi = \ceil{\frac{n}{2}}$:
\begin{gather*}
    \frac{n\cdot(2n-1)}{(n-2)\left(n+1  - \ceil{\frac{n}{2}} - 3 + 2\ceil{\frac{n}{2}} - \frac{2\left(\ceil{\frac{n}{2}} -1\right)^2}{\ceil{\frac{n}{2}}}\right)} = \\ \frac{n\cdot(2n-1)}{(n-2)\left(n+ 2 - \ceil{\frac{n}{2}} - \frac{2}{\ceil{\frac{n}{2}}}\right)} \leq \frac{n\cdot(2n-1)}{(n-2)\left(n+ 2 - \frac{n-1}{2} - \frac{4}{n-1}\right)} = \\ \frac{2n\cdot(2n-1)(n-1)}{(n-2)(n^2 +4n - 13)} = 4 - o(1)
\end{gather*}

2. $f_\pi = n-2$:
\begin{gather*}
    \frac{n\cdot(2n-1)}{(n-2)\left(n+1  - n + 2 - 3 + 2\ceil{\frac{n}{2}} - \frac{2\left(\ceil{\frac{n}{2}} -1\right)^2}{n-2}\right)} = \\ \frac{n\cdot(2n-1)(n-2)}{(n-2)\left(2\ceil{\frac{n}{2}}(n-2) - 2\left(\ceil{\frac{n}{2}} -1\right)^2\right)} = \\ \frac{n\cdot(2n-1)}{2n\ceil{\frac{n}{2}} - 4\ceil{\frac{n}{2}} - 2\left(\ceil{\frac{n}{2}}\right)^2 + 4\ceil{\frac{n}{2}} - 2} = \\ \frac{n\cdot(2n-1)}{2\ceil{\frac{n}{2}}\cdot\left(n - \ceil{\frac{n}{2}}\right) - 2} \leq \frac{n\cdot(2n-1)}{\frac{n^2-1}{2} - 2} = \frac{2n\cdot(2n-1)}{n^2 - 5} = 4 + o(1)
\end{gather*}

  Finally, in case 3, 
  \begin{align*}
    R&\leq \frac{ \frac{2n-1}{n-2} f_\pi}{f_\pi\left(1 + \frac{1}{n}\right) - \frac{\floor{\frac{n}{2}}^2 +\ceil{\frac{n}{2}}^2}{n} +1}  
\end{align*}
   this ratio is decreasing in $f_\pi$, therefore 
    \begin{align*}
    R&\leq \frac{ n\frac{2n-1}{n-2} (n-1)}{(n-1)\left(n + 1\right) - \floor{\frac{n}{2}}^2 - \ceil{\frac{n}{2}}^2 +n}  \leq \\
    &\leq \frac{ 2n\frac{2n-1}{n-2} (n-1)}{n^2 -3 +2n}
\end{align*}
   it is easy to see that in this case $R\leq 4- o(1)$.
\end{proof}

\section{Lower bound for \ref{mech:approx}}
In this section,  we provide a $\FA$ instance where the approximation of \ref{mech:approx} is at least $4- o(1)$.

Let us construct a friendship relations graph for $n$ agents where for every $1 \leq i \leq \ceil{\frac{n}{2}} - 1$ there are edges from $i$ to $i + 1$ and from $i$ to $\ceil{\frac{n}{2}} + i$ and no other edges.

Since the first $\ceil{\frac{n}{2}} - 1$ nodes have a degree $2$, while others have a degree $1$, on the initialization step the mechanism will put them into $P_1$ together with one of the remaining nodes. This partition is stable and therefore in the output there will be a coalition of size $\ceil{\frac{n}{2}}$ corresponding to $P_1$ (since it is weakly connected) and $\floor{\frac{n}{2}}$ singleton coalitions into which $P_2$ will fall apart.

In the coalition of size $\ceil{\frac{n}{2}}$ there are exactly $\ceil{\frac{n}{2}} - 1$ edges, so, the social welfare of the mechanism's output is:
\begin{align*}
    \left(\ceil{\frac{n}{2}} - 1\right)\cdot\left(1 + \frac{1}{n}\right) - \frac{\ceil{\frac{n}{2}}\cdot\left(\ceil{\frac{n}{2}} - 1\right)}{n} = \left(\ceil{\frac{n}{2}} - 1\right)\cdot \frac{\floor{\frac{n}{2}} + 1}{n} \leq \\ \frac{n^2 - 1}{4n}
\end{align*}

Let us note that for any set of even positive numbers $k_1, \dots, k_{m-1}$ and a positive number $k_m$ such that $\sum_{i=1}^m k_i = n$ this graph can be partitioned into $m$ coalitions $C_1, \dots, C_m$ with $|C_i| = k_i$ and $C_i$ containing exactly $k_i - 1$ edges. If for every $i$ we put into $C_i$ the vertices with the numbers from the interval $\left[\sum_{j=1}^{i-1} \frac{k_j}{2} + 1; \sum_{j=1}^{i-1} \frac{k_j}{2} + \frac{k_i}{2}\right]$ and from the interval $\left[\ceil{\frac{n}{2}} + \sum_{j=1}^{i-1} \frac{k_j}{2} + 1; \ceil{\frac{n}{2}} + \sum_{j=1}^{i-1} \frac{k_j}{2} + \frac{k_i}{2}\right]$, the resulting partition will satisfy these properties. One might easily check that this is a proper partition, i.e. all the vertices are covered and for all $i \neq j$ it holds that $C_i \cap C_j = \emptyset$, and from the way the graph was constructed it directly follows that $C_i$ is weakly connected and contains $|C_i|$ edges. Moreover, $C_i$ consists of exactly $2\frac{k_i}{2} = k_i$ vertices.

Therefore, there exists a partition in which there are $\floor{\frac{n}{2\floor{\frac{\sqrt{n+1}}{2}}}}$ coalitions of size $2\floor{\frac{\sqrt{n+1}}{2}}$ (note that this number is even) and one coalition of size $n - \floor{\frac{n}{2\floor{\frac{\sqrt{n+1}}{2}}}}\cdot 2\floor{\frac{\sqrt{n+1}}{2}}$.

Its social welfare is at least equal to:
\begin{align*}
    \floor{\frac{n}{2\floor{\frac{\sqrt{n+1}}{2}}}} \cdot \left(2\floor{\frac{\sqrt{n+1}}{2}} - 1\right)\cdot \frac{n - 2\floor{\frac{\sqrt{n+1}}{2}} + 1}{n} > \\ \frac{n}{2\floor{\frac{\sqrt{n+1}}{2}}} \cdot (\sqrt{n+1} - 1) \cdot \frac{n - \sqrt{n+1} - 1}{n} > \\ \frac{\sqrt{n+1} - 1}{\sqrt{n+1} + 1} \cdot (n - \sqrt{n+1} - 1)
\end{align*}

Thus, the ratio between $\opt$ and the social welfare of the mechanism's outcome is greater or equal to:
\begin{align*}
    \frac{\frac{\sqrt{n+1} - 1}{\sqrt{n+1} + 1} \cdot (n - \sqrt{n+1} - 1)}{\frac{n^2 - 1}{4n}} = \frac{4n\cdot (\sqrt{n+1} - 1) (n - \sqrt{n+1} - 1)}{(n^2-1)(\sqrt{n+1} + 1)} =\\ 4 - O\left(\frac{1}{\sqrt{n+1}}\right)
\end{align*}

\end{document}